%% file: main.tex
\documentclass[11pt]{article}
\usepackage[english]{babel}
\usepackage{fullpage}
\usepackage[utf8]{inputenc}
\usepackage{amsmath,amsthm,amsfonts,mathtools}
\usepackage{xcolor}
\usepackage[hidelinks]{hyperref}
\usepackage{cleveref}
\usepackage{tcolorbox}
\usepackage{wrapfig}

\usepackage[linesnumbered,ruled,vlined]{algorithm2e}

\title{Non-obvious Manipulability in the Additively Separable Group Activity Selection Problem}
\author{Maria Fomenko$^1$, Giovanna Varricchio$^2$}
\date{%
	$^1$Gran Sasso Science Institute, L'Aquila, Italy\\%
	$^2$University of Calabria, Rende, Italy\\%
		 maria.fomenko@gssi.it, giovanna.varricchio@unical.it
	\\[2ex]%
	\today
}

\include{macros_cleaned}

\begin{document}
\maketitle

\begin{abstract}
\input{abstract}
\end{abstract}

\input{intro}
\input{preliminaries}
\input{opt_vs_nom}

\input{apx_nom}
\input{conclusions}
\bibliographystyle{plain}
\bibliography{bibliography}
\appendix
\input{appendix}

\end{document}

%% file: macros_cleaned.tex
\usepackage{amsthm}
\usepackage{enumitem}
\usepackage{cleveref}
\usepackage{marginnote}
\usepackage{todonotes}
\usepackage{multicol}
\usepackage{multirow}
\usepackage{tikz}
\usetikzlibrary{calc} 
\usetikzlibrary{arrows}
\usepackage{diagbox}
\usepackage{adjustbox}
\usepackage{graphicx}

\usepackage{subcaption}

\usepackage{microtype}
\usepackage{varwidth}

\usepackage{thm-restate}

\usepackage[numbers]{natbib}
\usepackage{times}
\usepackage{soul}
\usepackage{url}
\usepackage[hidelinks]{hyperref}
\usepackage[utf8]{inputenc}
\usepackage{graphicx}
\usepackage{amsmath}
\usepackage{amsthm}
\usepackage{booktabs}
\usepackage[switch]{lineno}

\usepackage{amsmath,amsthm,amsfonts,mathtools}

\newtheorem{theorem}{Theorem}
\newtheorem{lemma}{Lemma}
\newtheorem{corollary}{Corollary}

\theoremstyle{definition}
\newtheorem{example}{Example}
\newtheorem{definition}{Definition}

\newcommand{\agents}{\mathcal{N}}

\newcommand{\SW}{\mathsf{SW}}
\newcommand{\set}[1]{\{#1\}}

\newcommand{\modulus}[1]{\left\lvert #1 \right\rvert }

\newcommand{\opt}{\textsf{opt}}
\newcommand{\instance}{\mathcal{I}}

\newcommand{\NN}{\ensuremath{\mathbb{N}}}
\newcommand{\RR}{\ensuremath{\mathbb{R}}}

\newcommand{\classNP}{\textsf{NP}}

\newcommand{\classP}{\textsf{P}}

\newcommand{\mech}{\mathcal{M}}
\newcommand{\declared}{\mathbf{d}}

\newcommand{\agentsmini}{\agents\setminus\set{i}}

\newcounter{mechanismCounter}
\newtheorem{mecha}{Mechanism}
\newenvironment{mechanism}
{
\addtocounter{mechanismCounter}{1}
\mecha 
}
{\endmecha}

\crefname{mechanism}{Mechanism}{Mechanism}

\DeclareMathOperator*{\argmax}{arg\,max}
\DeclareMathOperator*{\argmin}{arg\,min}

\newcommand{\assignment}{\mathbf{z}}
\newcommand{\optOutcome}{\mathbf{o}}

\newcommand{\gasp}{AS-GASP}
\newcommand{\weights}{\mathcal{W}}
\newcommand{\preferences}{\mathcal{P}}
\newcommand{\nonneg}{\RR^{+}_0}
\newcommand{\arbitrary}{\RR}
\newcommand{\binary}{\set{0,1}}
\newcommand{\void}{a_{\emptyset}}

\usepackage[ruled,vlined, linesnumbered]{algorithm2e}

%% file: abstract.tex
In this work, we study the \emph{additively separable Group Activity Selection Problem} (\gasp) in an imperfect information setting, where agents have private preferences over activities and weights over other agents. 
Our goal is to design mechanisms that assign agents to activities based on their declared preferences and weights, with the objective of maximizing social welfare while ensuring truthful reporting. We, therefore, focus on the notion of \emph{non-obvious manipulability} (NOM), a form of resilience to manipulation. 
We first investigate the relationship between NOM and social welfare optimality. In this regard, our main result shows that, when preferences and weights are arbitrary or non-negative, any optimal mechanism is non-obviously manipulable. In contrast, when either preferences or weights are binary, we show that optimality and NOM may be incompatible.
We then turn to computational aspects. While it is known that computing an optimal outcome for the \gasp\ is \classNP-hard even in restricted settings, we establish a strong inapproximability result showing that no polynomial-time algorithm can guarantee a bounded approximation ratio when preferences and weights may take arbitrary values. In turn, when preferences are non-negative, we show that a bounded approximation is possible, and we present two asymptotically optimal approximation mechanisms that are also guaranteed to satisfy NOM.

%% file: intro.tex
\section{Introduction}
The \emph{Group Activity Selection Problem} (GASP)~\cite{darmann2018approval} models settings where a set of agents has to be assigned to a set of activities according to their valuations for both the activity they are assigned to and the other agents participating in the same activity. The GASP captures a wide range of realistic scenarios, including workers splitting into teams to perform tasks, employees being allocated to different locations, students assigned to classes, and many others.

An interesting subclass of the GASP is the \emph{additively separable} GASP (\gasp)~\cite{bilo2019optimality}. In this model, each agent expresses a preference for each activity and a weight for every other agent, which are real values representing her appreciation for participating in that activity and for being grouped with that agent, respectively. The utility of an agent is given by the sum of the value of the activity she is assigned to and the weights she assigns to all her co-participants. The \gasp\ generalizes a well-known class of coalition formation games, namely \emph{additively separable Hedonic Games} (ASHGs). In general, Hedonic Games (HGs) can be seen as a special case of the GASP in which the activity assigned to the agents does not influence the evaluation of the outcome, and it solely depends on the coalition members.

Most of the existing literature on the GASP and HGs focuses on stability notions, such as Nash stability and core stability, under the assumption that agents' preferences are publicly known. In contrast, this paper studies GASP in an \emph{imperfect information} setting, where agents' preferences and weights are private information. A central authority must assign agents to activities based on their declared values, with the objective of maximizing the \emph{social welfare}, i.e., the sum of agents' utilities.

In this context, the strategic aspects of how agents reveal their private information become central. The algorithm (a.k.a.\ {\em mechanism}) used to determine the assignment of agents to activities should incentivize agents to report their true preferences and weights. For example, if truthful reporting is a dominant strategy for every agent, as no agent is able to increase her utility by misreporting her private information, the mechanism is said to be {\em strategyproof}. However, recent works on strategyproof mechanisms for the \gasp\ and HGs highlight that the standard notion of strategyproofness is often incompatible with desirable outcomes \cite{flammini2021strategyproof, flammini2022strategyproof, flammini2022approximate}. For this reason, the study of \emph{non-obvious manipulability} (NOM) has been undertaken in the context of HGs, and recent findings reveal that this notion is not only more realistic for agents with limited knowledge, but also leads to more satisfactory outcomes. In line with this emerging body of work, we initiate the study of non-obvious manipulability in the context of the \gasp.

\paragraph{Our Contribution.}

\begin{table*}[t]
\centering
\begin{multicols}{2}
\resizebox{!}{0.95cm}{
  \begin{tabular}[t!]{lccc}
  \centering
        &\multicolumn{3}{c}{Weights}\\
        \hline
        Prefer. & $\arbitrary$ & $\nonneg$, $m\geq 2$ & $\binary$\\
        \hline \\[-0.3cm]
        $\arbitrary/\nonneg$ &  $\forall$ (Thm.~\ref{thm:OPTisNOMarbitraryWeights})     & $\forall$  (Thm.~\ref{thm:OPTisNOMnonNegativeWeights})     & $\not\!\exists$ (Thm.~\ref{thm:noOPTbinaryWeights})  \\[0.1cm]
        $\binary$   & $\exists$ (Thm.~\ref{thm:existsOPTbinaryPreferences})  & $\not\!\exists$ (Thm.~\ref{thm:noOPTbinaryPreferences})    & $?$  \\[0.1cm]
        \hline
    \end{tabular}
    }
       \caption{Compatibility of optimality and non-obvious manipulability. }
    \label{tab:NOMvsOPT}
\columnbreak
\resizebox{!}{0.95cm}{
\begin{tabular}[t!]{lcc}
\centering
     &\multicolumn{2}{c}{Weights}\\
        \hline
        Prefer. & $\arbitrary$ & $\nonneg$ \\
        \hline \\[-0.3cm]
        \multirow{2}*{$\nonneg$ }   & $n^2$ (Thm.~\ref{thm:matchingNOMnonNeg}+\ref{thm:quadraticAPX}) & 1, if $m\leq 2$, (\cite{bilo2019optimality} + Thm.~\ref{thm:OPTisNOMnonNegativeWeights}) 
        \\[0.1cm]
        &$\Omega(n^2)$ (Thm.~\ref{thm:quadraticInapproximability}) & $2-\frac{1}{m}$, if  $m\geq 3$,  (\cite{bilo2019optimality} + Thm.\ref{thm:2apxNOM})   \\[0.1cm]
        \hline
    \end{tabular}
    }
    \caption{Efficient approximation and non-obvious manipulability.}
    \label{tab:APXvsOPT}
\end{multicols}
\caption*{Our results. In \Cref{tab:NOMvsOPT} we show the compatibility of optimality and non-obvious manipulability. We use the notation: $\forall$ = every optimal mechanism is NOM, $\not\!\exists$ = no optimal mechanism is NOM, $\exists$ = there exists an optimal and NOM mechanism. Symbol $?$ denotes the open problems.  In \Cref{tab:APXvsOPT}, we show our approximability results in conjunction with non-obvious manipulability and poly-time complexity. }
\label{tab:results}
\end{table*}

Strategyproofness in the AS-GASP has been studied in \cite{flammini2022approximate}; however, the authors show that no bounded approximation is possible in conjunction with strategyproofness whenever preferences and weights are non-negative. 
With this work, we provide an almost complete picture of non-obviously manipulable mechanisms that guarantee good performance with respect to the utilitarian social welfare. 
We first study the compatibility of optimality and non-obvious manipulability depending on the values that preferences and weights may take.
We show that non-obvious manipulability is always compatible with optimality, except when either preferences or weights are binary. 
An overview 
of our results in this sense
of the compatibility between optimality and non-obvious manipulability 
is reported in \Cref{tab:NOMvsOPT}. 

Unfortunately, \cite{bilo2019optimality} showed that computing an optimal assignment becomes computationally hard as soon as there are at least three activities, even when preferences and weights are binary. Aiming at efficient computation, we therefore turn to the study of non-obvious manipulability in conjunction with bounded approximation. In this direction, we prove that, when preferences and weights may take arbitrary (possibly negative) real values, no polynomial-time algorithm can guarantee a bounded approximation ratio unless $\classP=\classNP$ (\Cref{thm:unboundedAPX}). Finally, by restricting attention to non-negative preferences, we present polynomial-time mechanisms that achieve both a bounded approximation guarantee and non-obvious manipulability, as summarized in \Cref{tab:APXvsOPT}. All missing proofs can be found in the appendix (\Cref{sec:supplemental}).

\paragraph{Related Work.}
In the past decade, considerable attention has been devoted to the GASP~\cite{darmann2018approval,darmann2015ordinal,darmann2017simplified,igarashi2017group,igarashi2017parameterized,ganian2023group,lee2017parameterized},  as an interesting generalization of the well-known HGs \cite{dreze1980hedonic}.  
In this stream of research, individual and group deviations have been considered, providing several hardness results concerning the existence of stable solutions along with positive results for specific cases.

In the same spirit of~\cite{aziz2011stable} for HGs, the AS-GASP 
has been introduced and studied in~\cite{bilo2019optimality}. Besides the study of Nash stable outcomes and of the price of anarchy and stability, the \classNP-hardness of maximizing the utilitarian welfare has been proven. Moreover, an efficient $(2 - 1/m)$-approximation for non-negative preferences and weights, where $m$ denotes the number of activities, was given.

Some research in HGs and GASP focused on the scenario in which agents' valuations are not known and the assignment of agents should be computed in a way that encourages truthful reporting.
Strategyproofness is a central notion for preventing strategic misreporting by agents and has been extensively studied in relation to both stability~\cite{wright2015mechanism,rodriguez2009strategy,dimitrov2004enemies} and social welfare optimality and approximation~\cite{flammini2021strategyproof,flammini2022strategyproof,flammini2022approximate}. In addition, several works analyze whether specific rules are strategyproof or manipulable~\cite{long2019strategy,darmann2019manipulability}.

In contrast to strategyproofness, in the past few years, non-obvious manipulability has been introduced~\cite{troyan2020obvious} and turned out to be a relaxation good enough to circumvent the inherent impossibility results of strategyproofness in several game-theoretical settings \cite{aziz2021obvious,troyan2024non,psomas2022fair}.

In the specific case of ASHGs, while it was shown that no bounded approximation is possible in conjunction with strategyproofness~\cite{flammini2021strategyproof};
in \cite{ferraioli2025non}, in turn, a non-obviously manipulable mechanism with the approximation ratio asymptotically matching the best polynomial approximation of the utilitarian welfare was provided. In the case of Friends and Enemies games, a special class of ASHGs, non-obvious manipulability~\cite{flammini2025non} improved the approximation linearly compared to that attained under  strategyproofness~\cite{flammini2022strategyproof}.

The interplay between strategyproofness and approximation of the utilitarian welfare for the \gasp\ has been studied in~\cite{flammini2022approximate}. However, also in this setting, strategyproofness proves to be a particularly demanding requirement, as it generally precludes the possibility of guaranteeing a bounded approximation ratio. With our work, we align with the recent stream of research on non-obvious manipulability, with the aim of achieving good approximation guarantees while still ensuring some form of resilience to manipulation.

%% file: preliminaries.tex
\section{Preliminaries}
We next present the preliminaries of our work. We use $[k]$ to denote $\set{1,\dots, k}$ for any $k\in\NN$.

\subsection{The additively separable group activity selection problem (AS-GASP)}
An \gasp\ instance $\instance=(\agents,A, \set{v_i}_{i\in\agents})$ is given by 
a set of $n\geq2$ {\em agents} $\agents$ and a set of $m\geq1$ {\em activities} $A$. Each agent $i\in\agents$ has {\em valuations} $v_i = (p_i, w_i)$ expressing the values $i$ attributes to the activities and the other agents. Specifically, $i$ defines her values towards activities with a {\em preference} function $p_i : A \rightarrow \preferences \subseteq \arbitrary$, where $\preferences$ constitutes the range of values $p_i$ may take.
Similarly, the {\em weight} function $w_i : \agents\setminus\set{i} \rightarrow \weights \subseteq \arbitrary$, with $\weights$ denoting its range of possible values, expresses the values $i$ gives to any other agent.
In this work, we consider the settings where preferences (resp.\ weights) can be {\em arbitrary} (taking values from $\arbitrary$), {\em non-negative} (taking values from $\nonneg$), or {\em binary} (taking values from $\set{0, 1}$).
For simplicity, we write $\weights\in X$, resp.\ $\preferences\in X$, to denote the admissible range of weights and of preferences, e.g., $\weights \in \set{\arbitrary, \nonneg}$ means that weights can be arbitrary or non-negative, while $\weights ={\set{0,1}}$ means that weights can only be binary.

An {\em outcome}, also called {\em assignment}, of the game will be denoted by $\assignment$, where $z_i\in A\cup \set{\void}$ is the activity of $i$ in $\assignment$ with $\void$ denoting the {\em void activity}. The void activity is used to represent the scenario in which an agent is not participating in any activity in $A$. In that case, $p_i(\void)=0$.  While the agents assigned to an~activity $a\in A$ are assumed to participate in the activity together, hence forming a {\em coalition}, an agent assigned to the void activity is considered to be alone.
Therefore, if $z_i \in A$, $\delta_i(\assignment)= \sum_{j\in \agents\,:\, z_j=z_i }w_i(j)$ denotes the overall evaluation of $i$ for the agents participating in $z_i$, otherwise, if $z_i= \void$, we set  $\delta_i(\assignment)= 0$.
The {\em utility} of agent $i$ in the assignment $\assignment$ is given by $u_i(\assignment)= p_i(z_i)+\delta_i(\assignment)$.

Given an instance $\instance$, to evaluate the performance of an assignment, we use the classical definition of {\em utilitarian social welfare}, given by $\SW^\instance(\assignment)=\sum_{i\in \agents}u_i(\assignment)$. We may simply write $\SW(\assignment)$ when the instance is clear from the context.

Given an instance $\instance$, the {\em social optimum}, denoted by $\opt(\instance)$, is the maximum utilitarian social welfare achievable by any assignment, and we denote by $\optOutcome(\instance)$ an assignment maximizing the utilitarian social welfare. When the instance is clear from the context, we refer to the social optimum and an optimal outcome simply as $\opt$ and $\optOutcome$, respectively.

\subsection{Mechanism Design Framework}
In this work, we assume that each agent $i\in \agents$ possesses a private valuation $v_i$ and, before the assignment is computed, communicates a {\em declaration} $d_i$, possibly different from 
her true valuations 
$v_i$. To distinguish  
the true values of $i$ we write $v_i=(p^t_i,w^t_i)$, where $t$ stands for {\em truthful}, and $d_i=( p_i, w_i)$ to denote any possible declaration of $i$. We denote by $\declared=(d_1, \dots, d_n)$ the collection of the declarations of all agents and by $\declared_{-i}$ the declarations of all agents except that of $i$.
Note that the utility $u_i(\assignment)$ attained by agent $i$ in a given assignment $\assignment$ is computed with respect to her true valuation $v_i$, and not with respect to the declared $d_i$. However, when computing an optimal or approximately optimal outcome, we can only rely on the agents' declarations. We write $\opt(\declared)$ and $\SW^\declared$, to denote the optimum and the social welfare function when agents declare $\declared.$

A {\em mechanism} $\mech$ is an algorithm that, for any declaration $\declared$, outputs an assignment $ \mech(\declared)$. 
We evaluate the performance of $\mech$ through the corresponding {\em approximation ratio} 
$r^\mech = \sup_\declared \frac{\opt(\declared)}{\SW^\declared(\mech(\declared))}$. A mechanism $\mech$ is {\em optimal} if $r^\mech = 1$. 

Our goal is to provide mechanisms that are both efficient in terms of the social welfare and resistant to attempts of manipulation.  
A well-studied form of resilience to manipulation is {\em strategyproofness}.
A~mechanism is said to be \emph{strategyproof} (SP) if, for any $i\in \agents$, $\declared_{-i}$, $d_i$, and true $v_i$, 
it satisfies
\[
u_i(\mech(\declared_{-i}, v_i)) \geq u_i(\mech(\declared_{-i}, d_i))\, .
\]

In turn, a mechanism is \emph{manipulable} if it is not strategyproof. 

Unfortunately, \cite{flammini2022approximate} showed that even if $\weights=\preferences=\nonneg$, no bounded approximation is possible when requiring strategyproofness. 
In order to attain good performances in the sense of the utilitarian social welfare, we consider the following relaxation.

\begin{definition}[Non-Obvious Manipulability]\label{def:NOM}
    A mechanism $\mech$ is said to be \emph{non-obviously manipulable} (NOM) if, for every $i\in \agents$, real values $v_i$, and any other declaration $d_i$, the following two conditions hold true:
    \begin{enumerate}
        \item\label{NOM:sup} $\sup\limits_{\declared_{-i}} u_i(\mech(\declared_{-i}, v_i)) \geq \sup\limits_{\declared_{-i}} u_i(\mech(\declared_{-i}, d_i))$, and 
        \item\label{NOM:inf} $\inf\limits_{\declared_{-i}} u_i(\mech(\declared_{-i}, v_i)) \geq \inf\limits_{\declared_{-i}} u_i(\mech(\declared_{-i}, d_i))$.
    \end{enumerate}
In case there exist $i$, $v_i$, and $d_i$ such that one of the aforementioned conditions is violated, $\mech$ is called \emph{obviously manipulable} and $d_i$ is an \emph{obvious manipulation}.
\end{definition}

\paragraph{Useful Notation.}

For a fixed number of agents $n$ and declaration $d_i=(p_i, w_i)$ of agent $i$, we denote by $M_{d_i}$ the value 
\begin{equation}\label{eq:M_d_i}
    n\cdot\left( 1 + \max\{\max_{j \in \agents \setminus \set{i}}\left|w_i(j)\right|,\max_{a\in A}\left|p_i(a)\right|\}\right).
\end{equation}
The value $M_{d_i}$ should be interpreted as an arbitrarily large positive real number with respect to the values declared in $d_i$.

Throughout the paper, our proofs consider an arbitrary $i\in \agents$. Consequently, without loss of generality, we may assume that the activities are sorted according to the true preferences of $i$, i.e., $p^t_i(a_1) \geq p^t_i(a_2) \geq \dots \geq p^t_i(a_m)$. 
For notational convenience, we may also assume that $a_k=\void$ for each $k\geq m+1$.

%% file: opt_vs_nom.tex
\section{Optimality and NOM}
In this section, we characterize the cases in which non-obvious manipulability is compatible with optimality, as summarized in \Cref{tab:NOMvsOPT}. To this end, we first present a series of auxiliary lemmas that explore some possible optimal outcomes when only $d_i$, for a given agent $i$, is fixed.

\subsection{Possible Optimal Outcomes for a Fixed $d_i$}

This first lemma establishes that, upon truthful reporting, there always exists a declaration of the others such that, in any optimal assignment, agent~$i$ attains her highest possible utility.

\begin{lemma}\label{lemma:OptBest}
If $\preferences, \weights \in \{\arbitrary, \nonneg, \binary\}$, then, for any truthful declaration $v_i = (p^t_i, w^t_i)$ of agent $i$, there exists a declaration of the other agents $\declared_{-i}$ such that, in any optimal outcome of the instance with $\declared = (v_i, \declared_{-i})$, agent $i$ attains her highest possible utility, namely, $\bar{u}=\max_{\assignment}u_i(\assignment)$.
\end{lemma}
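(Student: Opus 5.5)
The plan is to pick an assignment $\assignment^*$ in which agent $i$ attains $\bar u$, and then choose $\declared_{-i}$ so that $\assignment^*$ simultaneously maximizes the declared utility of \emph{every} agent. Since the social welfare is the sum of the agents' utilities, $\SW^{\declared}(\assignment)\le \sum_{j}\max_{\assignment'}u^{\declared}_j(\assignment')$ for every $\assignment$, and $\assignment^*$ attains this bound. Hence every optimal outcome must give each agent her individual maximum as well, and in particular must give $i$ exactly $\bar u$. Agent $i$'s declaration is her true $v_i$, so her declared maximum coincides with $\bar u$.

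\textbf{Step 1: describe $\bar u$ and $\assignment^*$.} Let $F=\set{j\neq i : w^t_i(j)>0}$ be the set of $i$'s friends, and let $a^*\in\argmax_{a\in A}p^t_i(a)$. Other agents can be placed anywhere, so
\[
\bar u=\max\Big\{0,\; p^t_i(a^*)+\sum_{j\in F}w^t_i(j)\Big\}.
\]
Indeed, in any activity $i$ gains at most the positive weights, and being void yields $0$. If the second term is the maximum, define $\assignment^*$ by placing $i$ and all of $F$ in $a^*$ and all other agents on $\void$. Otherwise, place every agent, including $i$, on $\void$.

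\textbf{Step 2: the declarations.} In the second case of Step 1, every $j\neq i$ declares all preferences and weights equal to $0$, which is admissible in every range considered. Then $\SW^\declared(\assignment)=u_i(\assignment)\le\bar u$ for every $\assignment$, with equality at $\assignment^*$, so optimality forces $u_i=\bar u$.

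In the first case, each $j\in F$ declares:
\begin{itemize}
\item preference equal to the largest admissible positive value for $a^*$ (value $1$ works in all three ranges) and $0$ for every other activity;
\item weight $1$ towards every agent of $F\cup\set{i}\setminus\set{j}$ and weight $0$ towards everyone else.
\end{itemize}
Each $j\notin F\cup\set{i}$ declares all zeros.

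Under these declarations:
\begin{itemize}
\item A friend $j$ has declared maximum utility $1+|F|$, attained exactly when she is in $a^*$ together with all of $F\cup\set{i}\setminus\set{j}$. This is the case in $\assignment^*$.
\item A non-friend has declared utility identically $0$, so she is trivially maximized.
\item Agent $i$ gets $\bar u$ in $\assignment^*$. This uses that the non-friends are void, so no agent $j$ with $w^t_i(j)\le 0$ shares her coalition.
\end{itemize}
All declared values lie in $\set{0,1}\subseteq\nonneg\subseteq\arbitrary$, so the construction is uniform over $\preferences,\weights\in\set{\arbitrary,\nonneg,\binary}$.

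\textbf{Step 3: conclude.} By the sum-of-maxima argument in the first paragraph, every optimal $\assignment$ satisfies $u^\declared_j(\assignment)=\max u^\declared_j$ for all $j$. For $j=i$ this gives $u_i(\assignment)=\bar u$.

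\textbf{Main obstacle.} The delicate point is Step 1, namely checking that $\bar u$ is exactly as stated and that $\assignment^*$ is a valid outcome. In particular, if $F=\emptyset$ and $p^t_i(a^*)>0$, agent $i$ sits alone in $a^*$. Ties such as $p^t_i(a^*)+\sum_{F}w^t_i(j)=0$ are harmless, since either choice of $\assignment^*$ works. The argument needs no restriction on $m$ beyond $m\ge1$.
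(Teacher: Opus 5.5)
Your proof is correct, but it takes a more roundabout route than the paper's. The paper uses one construction for every case: each $j\neq i$ declares all preferences and weights equal to $0$. Then $\SW^{\declared}(\assignment)=u_i(\assignment)$ for every assignment $\assignment$, so every optimum gives $i$ exactly $\bar u=\max_{\assignment}u_i(\assignment)$.

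This is exactly the construction in your second case. Your justification there ($\SW^{\declared}(\assignment)=u_i(\assignment)\le\bar u$, with equality at $\assignment^*$) never uses the fact that $\assignment^*$ puts everyone on $\void$. It therefore applies word for word in your first case, so the case split and the friends' special declarations can be dropped.

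Your first-case argument is nonetheless valid. You choose $\declared_{-i}$ so that one assignment simultaneously maximizes every agent's declared utility, and then use the sum-of-maxima bound. Each friend's maximum $1+|F|$ is attained at $\assignment^*$, non-friends are indifferent, and equality of the sum forces every term to its maximum. (The phrase ``largest admissible positive value'' is ill-defined for $\arbitrary$ and $\nonneg$, but this is harmless because you fix the value to $1$.)

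The two routes buy different things. Your aligned-incentives principle is more general: it would still work in variants where all-zero declarations are not allowed, or where the other agents need non-trivial declared utilities. Here, however, $0$ lies in every range $\preferences,\weights\in\set{\arbitrary,\nonneg,\binary}$, so the paper's argument is strictly simpler.
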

\begin{proof}
Note that $\bar{u}=\max_{a\in A}p^t_i(a) + \sum_{j\neq i \,:\, w^t_i(j) > 0}w^t_i(j)$, if positive, and $\bar{u}=0$, otherwise. In fact, the utility of agent $i$, when assigned to an activity in $A$, is maximized when only the agents evaluated non-negatively participate in the same activity. Under such an assumption, the best scenario is when the selected activity is the best for $i$.  However, this assignment can still produce a negative utility for agent $i$, and in that case, it is always preferable to be assigned to the void activity. 

Let us now define the declarations $\declared_{-i}$ so that, for every agent $j \in \agentsmini$, all preferences and weights are set to $0$.

Note that for the resulting \gasp\ instance with $\declared=(v_i, \declared_{-i})$, the social welfare of any assignment $\assignment$, denoted by $\SW(\assignment)$, coincides exactly with the utility of agent $i$ in that assignment.
Therefore, if agent $i$ is assigned in $\assignment$ to some activity $a \in A$ together with a (possibly empty) subset of agents $C \subseteq \agentsmini$, we have
$  \SW(\assignment)= p_i^t(a) + \sum_{j\in C}w^t_i(j) .$

Consider now any optimal assignment $\optOutcome$.
If $\bar{u}>0$, then
in $\optOutcome$, $i$ must be assigned to any activity in $\argmax_{a\in A}p^t_i(a)$ together with all agents in $\set{j \,\vert\, w^t_i(j) > 0}$ and possibly some agents in $\set{j \,\vert\, w^t_i(j) = 0}$; no other agent is assigned to the same activity of $i$. If $\max_{a\in A}p^t_i(a) + \sum_{j: w^t_i(j) > 0}w^t_i(j)\leq 0$, 
then in any optimal outcome, agent $i$ attains zero utility, which can be achieved by assigning agent $i$ to the void activity.
\end{proof}

From \Cref{lemma:OptBest} and the definition of $\bar{u}$, we can derive:
    $\sup_{\declared_{-i}} u_i(\mech(\declared_{-i}, v_i))\! =\! \bar{u} \geq \sup_{\declared_{-i}} u_i(\mech(\declared_{-i}, d_i)).$ This shows the following.

\begin{corollary}\label{cor:OptBest}
If $\preferences, \weights \in \{\arbitrary, \nonneg, \binary\}$, any optimal mechanism $\mech$ satisfies Condition~\ref{NOM:sup} of \Cref{def:NOM} for any $i\in\agents$.
\end{corollary}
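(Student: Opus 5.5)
The corollary follows by combining \Cref{lemma:OptBest} with the fact that $\bar{u}$ bounds $i$'s utility in every assignment. Fix an optimal mechanism $\mech$, an agent $i$, her true valuation $v_i=(p^t_i,w^t_i)$, and an arbitrary alternative declaration $d_i$. We must show
\[
\sup_{\declared_{-i}} u_i(\mech(\declared_{-i}, v_i)) \geq \sup_{\declared_{-i}} u_i(\mech(\declared_{-i}, d_i)),
\]
where every utility is computed with respect to the true valuation $v_i$.

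First, I would bound the right-hand side. Let $\bar{u}=\max_{\assignment}u_i(\assignment)$ be the largest utility $i$ can get in any assignment, measured with her true values. This maximum exists because the number of assignments is finite for fixed $n$ and $m$. For every $\declared_{-i}$, the outcome $\mech(\declared_{-i},d_i)$ is some assignment, so $u_i(\mech(\declared_{-i},d_i))\le \bar{u}$. Taking the supremum gives $\sup_{\declared_{-i}} u_i(\mech(\declared_{-i},d_i))\le\bar{u}$. This step uses nothing about $\mech$ or $d_i$.

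Next, I would bound the left-hand side from below. By \Cref{lemma:OptBest}, there is a profile $\declared^*_{-i}$ such that, in the instance $(v_i,\declared^*_{-i})$, agent $i$ attains $\bar{u}$ in every optimal outcome. This profile is admissible whenever $\preferences,\weights\in\{\arbitrary,\nonneg,\binary\}$, since it uses only zeros. Since $\mech$ is optimal ($r^\mech=1$), $\mech(\declared^*_{-i},v_i)$ is an optimal outcome of that instance. Hence $u_i(\mech(\declared^*_{-i},v_i))=\bar{u}$, and so the left-hand supremum is at least $\bar{u}$. It is in fact equal to $\bar{u}$, by the upper bound from the previous step.

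Chaining the two steps gives Condition~\ref{NOM:sup}. The main point needing care is the second step. The lemma has to hold for every optimal outcome, not just for some optimal outcome, because $\mech$ may break ties arbitrarily. The lemma as stated provides exactly this, so no real obstacle remains.
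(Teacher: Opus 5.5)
Your proof is correct and follows the paper's argument. \Cref{lemma:OptBest} gives a profile of the others under which every optimal outcome, and hence the mechanism's output whatever its tie-breaking, gives $i$ utility $\bar{u}$; since $\bar{u}=\max_{\assignment}u_i(\assignment)$ bounds $u_i(\mech(\declared_{-i},d_i))$ for every $d_i$ and $\declared_{-i}$, you are just spelling out the paper's one-line chain $\sup_{\declared_{-i}} u_i(\mech(\declared_{-i}, v_i)) = \bar{u} \geq \sup_{\declared_{-i}} u_i(\mech(\declared_{-i}, d_i))$.
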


In what follows, we report additional lemmas establishing some of the possible optimal outcomes for a given $d_i$. 

\begin{restatable}{lemma}{NonemptyCoalitionArbitraryWeights}\label{lemma:NonemptyCoalitionArbitraryWeights}
If $\preferences\in \set{\arbitrary,\nonneg}$  and $\weights=\arbitrary$, given any 
$i \in \agents$, her 
declaration $d_i=(p_i, w_i)$, non-empty $C\subseteq \agentsmini$, and $a^* \in A$, there always exists declaration of other agents $\declared_{-i}$ such that in any optimum $i$ is assigned to the activity $a^*$ together with only the agents in $C$.
\end{restatable}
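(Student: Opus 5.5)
We construct $\declared_{-i}$ explicitly, using the large constant $M=M_{d_i}$ from \eqref{eq:M_d_i}. The declarations of the other agents will make the social welfare of any assignment be dominated by terms of order $M$ (or $M^2$). These terms are maximized exactly when $i$ shares $a^*$ with precisely the agents of $C$. The contribution of $d_i$ itself, at most $n(|p_i|_{\max}+|w_i|_{\max})<M$, then cannot alter the optimum. Since $C$ is nonempty, we may rely on the agents of $C$ to ``pull'' $i$ into $a^*$ and to repel everyone else. The latter requires negative weights, which is allowed because $\weights=\arbitrary$. Preferences will only take non-negative values, so the construction works both for $\preferences=\arbitrary$ and for $\preferences=\nonneg$.

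Concretely, for every $j\in C$ we set $p_j(a^*)=M^2$ and $p_j(a)=0$ for $a\neq a^*$. We set $w_j(i)=M$ and $w_j(k)=M$ for $k\in C\setminus\set{j}$, and $w_j(k)=-M^2$ for every $k\notin C\cup\set{i}$. For every agent $k\notin C\cup\set{i}$ we set all preferences and all weights to $0$.

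We then compare an arbitrary assignment $\assignment$ with the target assignment $\assignment^*$. In $\assignment^*$, $C\cup\set{i}$ is at $a^*$ and all other agents are at $\void$. Its welfare is at least
\[
|C|M^2+|C|M+|C|(|C|-1)M-M/n\cdot n/ n .
\]
More precisely, the only negative contributions come from $d_i$ and are bounded in absolute value by $M-n$.

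We argue in three steps, each showing that a deviation loses more than $M$:
\begin{enumerate}
\item If some $j\in C$ is not at $a^*$, the term $M^2$ is lost. The only way to compensate would be through weights of other agents, but these total at most $O(nM)<M^2$ once $M\geq n\cdot(\cdot)$. If needed, we enlarge the constants, e.g.\ by using $M^3$ for the preferences.
\item If some $k\notin C\cup\set{i}$ is at $a^*$, every $j\in C$ incurs $-M^2$, which again dominates all other terms.
\item If every $j\in C$ is at $a^*$ and no outsider is there, but $i$ is not at $a^*$, then at least $|C|M\geq M$ is lost. This exceeds the maximum possible gain from $d_i$, which is less than $M$.
\end{enumerate}
Hence every optimum coincides with $\assignment^*$ on the coalition at $a^*$.

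The main obstacle is the bookkeeping of orders of magnitude. The hierarchy has to be separated cleanly: the preference for $a^*$ must dominate the repulsion $-M^2$ of outsiders, which in turn must dominate the attraction $M$ toward $i$, which in turn must dominate $d_i$. If a single $M^2$ scale is not enough to make these comparisons strict, I would switch to distinct scales $M^3$, $M^2$, $M$. The main case to check is whether moving $C$ away from $a^*$ together, for instance to another activity or to $\void$, could ever pay off. It cannot, because the activity-preference term $M^3$ is then lost entirely. A further point is that agents outside $C$ are indifferent to everything, so they may end up anywhere except at $a^*$. This is fine, since the statement only requires that $i$ be at $a^*$ with exactly the agents of $C$.
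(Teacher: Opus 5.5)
\paragraph{The gap: Step~3 fails for $\preferences=\arbitrary$ (with $m\ge 2$).} Let $\mu$ be the largest absolute value in $d_i$, so $M=n(1+\mu)$.
\begin{itemize}
\item $M$ only guarantees that $i$'s declared contribution in a \emph{single} assignment is at most $n\mu=M-n$ in absolute value. It does not keep the \emph{change} of that contribution below $M$.
\item When $i$ leaves $a^*$, she gains $p_i(a')-p_i(a^*)$, which can reach $2\mu$ once negative preferences are allowed. She also gains $-w_i(C)$, and $w_i(S)$ for the set $S$ of outsiders she joins.
\item Nothing stops her from joining outsiders, because your construction makes them indifferent to everything.
\item For $|C|=1$ the only counterweight is $w_j(i)=M$. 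The gain can reach $2\mu+\mu+(n-2)\mu=(n+1)\mu$, which exceeds $M$ as soon as $\mu>n$.
\end{itemize}
Concretely, take $n=3$, $C=\set{j}$, an outsider $k$, and $d_i$ with $p_i(a^*)=w_i(j)=-10$, $p_i(a')=w_i(k)=10$, all other entries $0$. Then $M=33$. Your target assignment has welfare $M^2+M-20=1102$. Placing $j$ alone on $a^*$ and $\set{i,k}$ on $a'$ gives $M^2+20=1109$, and this is the unique optimum. Switching to scales $M^3,M^2,M$ does not help, because the failing comparison is between the attraction $M$ and $d_i$ itself. For $\preferences=\nonneg$ your construction is fine, since there the swing is at most $\mu+(n-1)\mu=n\mu<M$.

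\paragraph{How the paper handles it, and a repair.} The paper blocks exactly this deviation by letting outsiders dislike $i$ and the members of $C$: for $j\in C$ and $k\notin C\cup\set{i}$ it sets $w_k(i)=w_k(j)=w_j(k)=-M_{d_i}$. Each outsider that $i$ joins then costs another $M$, which is the ingredient your construction lacks. Even so, a single attraction $w_j(i)=M_{d_i}$ is too thin when $n=2$. With $p_i(a^*)=w_i(j)=-\mu$, $p_i(a')=\mu$ and $\mu>2$, going alone to $a'$ beats staying with $j$ in the paper's construction as well.

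The robust repair is to make the attraction exceed the largest possible swing of $i$'s term between two assignments, which is at most $2(M-n)$. For instance, take $w_j(i)=2M$. Step~3 then costs $2|C|M>2(M-n)$, and Steps~1 and~2 go through unchanged.

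Step~1 is true, but not for the reason you give: your bound ``$O(nM)<M^2$'' on the others' weights can fail for small $\mu$. The correct reason is that the positive weights of $C$ are already maximal in $\assignment^*$. So the lost $M^2$ only has to beat $i$'s swing of at most $2(M-n)$.
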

\begin{proof}
    Let us construct $\declared_{-i}$ as follows:
    \begin{itemize}
        \item for every $j, k \in C$, $w_j(k) = w_j(i) = p_j(a^*) = M_{d_i}$;
        \item for every $j \in C$ and $k \in \agents\setminus\left(C \cup \set{i}\right)$, $w_j(k) = w_k(j) = w_k(i) = - M_{d_i}$.
    \end{itemize}
    Any other preference and weight is set to $0$.  
    
    By the definition of $M_{d_i}$, see Eq.~(\ref{eq:M_d_i}), in any optimal assignment, the agents in $C$ as well as $i$ must be assigned to $a^*$, while no other agent is assigned to $a^*$. 
\end{proof}

The proofs of the remaining lemmas are almost straightforward and rely on a similar constructions of $\declared_{-i}$. Detailed proofs are provided in the appendix.

\begin{restatable}{lemma}{WorstActivity}\label{lemma:WorstActivity}
    If $\preferences \in \set{\nonneg, \arbitrary}$ and $\weights = \arbitrary$, then, for any declaration $d_i = (p_i, w_i)$ of agent $i$, and any subset of activities $A' \subseteq A$ with $\modulus{A'} = n-1$, if $m \geq n$, and $A' = A$ otherwise, there exists a declaration of the other agents $\declared_{-i}$ such that, in any optimum, agent $i$ is assigned to some activity $a \in \left(A\setminus A'\right) \cup \set{\void}$ alone.
\end{restatable}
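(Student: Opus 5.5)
We will construct $\declared_{-i}$ explicitly, in the same spirit as the proof of \Cref{lemma:NonemptyCoalitionArbitraryWeights}, using the large constant $M_{d_i}$ from Eq.~(\ref{eq:M_d_i}). The idea is to make the other $n-1$ agents ``occupy'' the activities of $A'$, one agent per activity, while making them dislike agent $i$ strongly. Then any optimum must leave $i$ alone, either on the void activity or on an activity outside $A'$.

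\textbf{Case $m\geq n$.} Write $A'=\set{b_1,\dots,b_{n-1}}$ and $\agentsmini=\set{j_1,\dots,j_{n-1}}$. We set the declarations of the others as follows:
\begin{itemize}
\item $p_{j_k}(b_k)=M_{d_i}$;
\item $w_{j_k}(i)=-M_{d_i}$;
\item $w_{j_k}(j_\ell)=-M_{d_i}$ for $k\neq \ell$;
\item every other value is $0$.
\end{itemize}
These values are admissible, since $\preferences\in\set{\nonneg,\arbitrary}$ allows $M_{d_i}\geq 0$ and $\weights=\arbitrary$ allows negative weights.

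\textbf{Case $m<n$, where $A'=A$.} Now not every agent can receive a private activity. We instead pick a map $\agentsmini\to A$ covering every activity, which exists because $n-1\geq m$, and set $p_j(b)=M_{d_i}$ for the activity $b$ assigned to $j$. We also set $w_j(i)=-M_{d_i}$ for every $j$, and all other weights to $0$. Agents sharing an activity then cause no penalty among themselves.

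We then compare welfare values. Consider the reference assignment that places each $j$ on its designated activity and $i$ on the void activity, or, when $m\geq n$, on her best activity outside $A'$ if that gives positive utility. Its welfare is at least $(n-1)M_{d_i}$ plus $i$'s contribution, which is bounded below by $0$.

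Now take any assignment in which some $j$ shares an activity with $i$. That assignment loses at least $M_{d_i}$ from $w_j(i)$. The only compensating gain is $i$'s declared contribution $p_i(\cdot)+\sum w_i(\cdot)$, which is at most $(n-1+1)\max|\cdot| < M_{d_i}$. Hence such an assignment is strictly suboptimal once we also account for the others' preference terms: the others can still collect at most $(n-1)M_{d_i}$ in total.

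Likewise, in the case $m\geq n$, any assignment that leaves some $j$ off $b_k$ or puts two $j$'s together loses at least $M_{d_i}$ relative to the reference. So in every optimum each $j_k$ sits on $b_k$ and no agent joins $i$. Therefore $i$ must sit alone, either on the void activity or on an activity of $A\setminus A'$; she cannot be on some $b_k$, since that would mean sharing with $j_k$. In the case $m<n$, every activity of $A$ is occupied by at least one $j$, so $i$ must be on $\void$.

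\textbf{Main obstacle.} The delicate step is the strictness of the inequality. We must check that $i$'s maximum declared gain, which is at most $\max_a|p_i(a)|+\sum_j|w_i(j)|\leq n\max|\cdot|$, is strictly smaller than the $M_{d_i}$ penalty, for \emph{every} deviation and not just the simple ones. This is exactly why $M_{d_i}$ carries both the factor $n$ and the ``$+1$''. We also need to argue carefully that a deviation cannot recoup the loss by reshuffling the other agents, since each $j$'s contribution is at most $M_{d_i}$ and is attained only in its designated configuration.
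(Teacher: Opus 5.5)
Your proof is correct and uses essentially the same construction as the paper: the other agents get preference $M_{d_i}$ on $A'$ and weight $-M_{d_i}$ (the paper uses $-2M_{d_i}$) towards $i$ and towards each other. The welfare comparison then rules out $i$ sitting on $A'$ or sharing with anyone, since $i$'s declared values total less than $M_{d_i}$. The differences are cosmetic: the paper lets every other agent value every activity of $A'$ and covers both cases $m\geq n$ and $m<n$ with one construction, whereas you use designated activities with a case split.
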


\begin{restatable}{lemma}{OptSingle}\label{lemma:OptSingle} 
If $\weights\in\set{\nonneg,\binary}$,  given any declaration $d_i=(p_i, w_i)$, in any optimum an agent $i$ is the only agent assigned to an activity $a$ only if $a\in \argmax_{a\in A\cup\set{a_{\emptyset}}}p_i(a)$.
\end{restatable}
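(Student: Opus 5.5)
The plan is an exchange argument: if some optimum $\optOutcome$ puts $i$ alone in an activity $a$ with $p_i(a)$ strictly below $\max_{b\in A\cup\set{\void}}p_i(b)$, we build another assignment with strictly larger declared social welfare, which contradicts optimality. Fix $\declared=(d_i,\declared_{-i})$ with $\weights\in\set{\nonneg,\binary}$, so every declared weight is non-negative. Since $i$ is alone in $a$, her utility is exactly $p_i(a)$ and she contributes nothing to anyone else's utility. Let $b\in\argmax_{b\in A\cup\set{\void}}p_i(b)$, so $p_i(b)>p_i(a)$.

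\emph{Case $b=\void$.} Move $i$ to the void activity and leave everyone else where they are. Her contribution to the welfare rises from $p_i(a)$ to $0>p_i(a)$. Every other agent keeps the same activity and the same co-participants, because $i$ was not with any of them. The welfare therefore strictly increases.

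\emph{Case $b\in A$.} Move $i$ from $a$ to $b$ and keep all other agents fixed; let $\assignment'$ be the new assignment. Agent $i$'s utility becomes $p_i(b)+\sum_{j:z_j=b}w_i(j)\ge p_i(b)>p_i(a)$, since the weights are non-negative. Each agent $j$ already in $b$ gains $w_j(i)\ge0$. No agent loses anything: $i$ was alone in $a$, so her departure affects nobody. Hence $\SW(\assignment')-\SW(\optOutcome)\ge p_i(b)-p_i(a)>0$.

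In both cases $\optOutcome$ is not optimal, so any agent alone in an activity $a$ must have $a$ in the argmax. This holds equally when $a=\void$, since an agent in the void activity is alone.

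The only delicate point is a matter of interpretation. If $b$ was empty in $\optOutcome$, agent $i$ is again alone after the move, which is fine. If $b$ was occupied, we need non-negativity of \emph{all} declared weights, both $i$'s toward the others and the others' toward $i$. This is exactly where the hypothesis $\weights\in\set{\nonneg,\binary}$ is used. The preferences may be arbitrary, since only $i$'s own preferences enter the comparison.
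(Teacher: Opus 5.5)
Your proof is correct and uses the same exchange argument as the paper: moving $i$ from $a$ to an activity in $\argmax_{b\in A\cup\set{a_{\emptyset}}}p_i(b)$, joining whoever is already there, strictly increases the declared welfare, because all weights are non-negative and $i$'s departure affects no one. Your version simply spells out the cases the paper's one-line proof leaves implicit, namely $b=a_{\emptyset}$ versus $b\in A$, and $a=a_{\emptyset}$.
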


\begin{restatable}{lemma}{NonemptyCoalition}\label{lemma:NonemptyCoalition}
    If $\preferences \in \set{\arbitrary, \nonneg}$, $\weights=\nonneg$ and $m\geq 2$, given any 
    agent $i \in \agents$, her 
    declaration $d_i=(p_i, w_i)$, non-empty $C\subseteq \agentsmini$, and $a^* \in A$, there exists declaration of other agents $\declared_{-i}$ such that in any optimum $i$ is assigned to $a^*$ with only the agents in $C$.
\end{restatable}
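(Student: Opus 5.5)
We will mimic the construction of \Cref{lemma:NonemptyCoalitionArbitraryWeights}, replacing negative weights by large activity preferences that pull all the other agents away from $a^*$. Since $m\geq 2$, fix an activity $b\in A\setminus\set{a^*}$. We build $\declared_{-i}$ as follows. For every $j\in C$, set $p_j(a^*)=M_{d_i}$ and $w_j(k)=w_j(i)=M_{d_i}$ for all $k\in C\setminus\set{j}$. For every $k\in\agents\setminus(C\cup\set{i})$, set $p_k(b)=L$, where $L$ is much larger than every other quantity involved, e.g.\ $L=n^2 M_{d_i}$. All remaining preferences and weights are set to $0$. All declared values are non-negative, so this respects $\preferences\in\set{\arbitrary,\nonneg}$ and $\weights=\nonneg$.

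We then argue about an arbitrary optimal assignment $\optOutcome$ in four steps.

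\emph{Step 1 (agents outside $C$).} Every $k\notin C\cup\set{i}$ must be assigned to $b$. Moving such a $k$ to $b$ gains $L$. It can lose at most what $k$ contributes elsewhere: $k$ declares only zero weights, and other agents' weights towards $k$ are $0$, except possibly $w_i(k)$, which is bounded by $M_{d_i}/n$. Since $L$ dominates this, any assignment not placing $k$ on $b$ is strictly improvable. The same reasoning, with the roles fixed, gives the next two steps.

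\emph{Step 2 (agents of $C$).} Every $j\in C$ is on $a^*$. Relocating all of $C$ and $i$ to $a^*$ yields at least $|C|\cdot M_{d_i}$ from preferences, plus the mutual and towards-$i$ weights. Any alternative placement loses at least $M_{d_i}$ per misplaced member, while gaining at most $n\max|d_i|<M_{d_i}$ through $i$'s declared values.

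\emph{Step 3 (agent $i$).} Agent $i$ joins $C$ on $a^*$. Since $C\neq\emptyset$, this gives $\sum_{j\in C}w_j(i)=|C|M_{d_i}$. Any other choice for $i$ changes $i$'s own declared contribution by at most about $2n\max|d_i|<M_{d_i}$, so it is strictly worse.

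\emph{Step 4 (no intruders).} By Step 1, nobody else is on $a^*$. So $i$ is with exactly $C$ at $a^*$.

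The main obstacle is the absence of negative weights, which in the arbitrary case kept outsiders away from $a^*$. Here $i$'s own declared positive weights $w_i(k)$ and $p_i$ might favour dragging outsiders into $a^*$, or dragging $i$ elsewhere. The fix is the two-scale choice $L\gg M_{d_i}\gg |d_i|$, which uses $m\geq 2$ to provide the separate attractor $b$. We must also check that the total gain from co-locating everyone (at most $n\cdot|d_i|$ from $i$) is below a single $M_{d_i}$; this is exactly what the factor $n$ in Eq.~(\ref{eq:M_d_i}) ensures. If $b$ had to coincide with $a^*$ (i.e.\ $m=1$), the construction would fail, consistent with the hypothesis.
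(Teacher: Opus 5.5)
Your construction is essentially the paper's. The paper fixes $a'\in A\setminus\set{a^*}$, sets $p_j(a^*)=w_j(i)=M_{d_i}$ for $j\in C$ and $p_k(a')=M_{d_i}$ for every $k\in\agents\setminus(C\cup\set{i})$, and reads off the optimum. Your mutual weights inside $C$ and the second scale $L=n^2M_{d_i}$ are harmless but unnecessary. An outsider kept off $a'$ loses $M_{d_i}$ and can gain at most $w_i(k)\le M_{d_i}/n-1$, so the single scale $M_{d_i}$ already suffices.

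The one step that is wrong as written is Step~3. You bound the change in $i$'s declared contribution by ``about $2n\max|d_i|$'' and claim this is below $M_{d_i}$. But $M_{d_i}=n(1+\max|d_i|)$, so that inequality fails whenever $\max|d_i|\ge 1$.

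The margin genuinely matters here. If $|C|=1$, the only thing keeping $i$ on $a^*$ is the single weight $w_j(i)=M_{d_i}$. Let $D=\max\{\max_{j}|w_i(j)|,\max_{a}|p_i(a)|\}$. Moving $i$ from $a^*$ (with $C$) to an activity $a$ with co-participants $S$ changes her declared contribution by
$p_i(a)-p_i(a^*)+w_i(S)-w_i(C)\le 2D+w_i(S\setminus C)\le 2D+(n-2)D=nD$.
This uses $w_i\ge 0$ and $C\neq\emptyset$; moving her to $\void$ changes it by at most $D$. Since $nD=M_{d_i}-n<|C|\,M_{d_i}$, Step~3 goes through. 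Alternatively, give the members of $C$ weight $L$ towards $i$.

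This same estimate is what justifies the ``$n\max|d_i|$'' you use in Step~2. With this correction your proof is complete.
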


\begin{restatable}{lemma}{AloneToActivity}\label{lemma:AloneToActivity}
    If $\preferences=\arbitrary$ and $\weights\in\set{\arbitrary,\nonneg}$, then given any 
    agent $i \in \agents$ and any 
    declaration $d_i = (p_i, w_i)$ of agents $i$, there exists a declaration of the other agents $\declared_{-i}$ such that, in any optimum, agent $i$ is assigned to some activity $a^*\in A\cup\set{\void}$ alone.
\end{restatable}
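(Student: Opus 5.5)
We construct the declarations of the other agents so that, in every optimum, agent $i$ is forced to be alone. Because preferences are arbitrary, we can make $i$ unattractive to the rest by penalizing the others' preferences; weights are not needed for this. Let $M=M_{d_i}$ as in Eq.~(\ref{eq:M_d_i}). For every $j\in\agentsmini$ we declare $w_j(i)=-M$. All other weights among the agents in $\agentsmini$ are set to $0$, except that, if $m\geq 2$, we want the others to pile up on one activity that $i$ then avoids. The simplest choice is to set every weight of the others to $0$ and every preference to $p_j(a)=0$ for all $a\in A$. Each $j\neq i$ then loses at least $M$ whenever it shares an activity with $i$, and gains nothing from any activity, so every $j\neq i$ is indifferent among all outcomes that keep it away from $i$.

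Next we argue that in any optimum no $j\neq i$ shares an activity with $i$. Take any assignment $\assignment$ in which $i$ is at some $a\in A$ together with a non-empty set $C$. Move every agent of $C$ to the void activity, giving $\assignment'$. Each $j\in C$ changes its utility from $p_j(a)+\sum_{k\in C\cup\set{i}\setminus\set{j}}w_j(k)=-M$ to $0$, a gain of $M$ per agent. Agent $i$'s declared utility changes by $-\sum_{j\in C}w_i(j)$, which is at most $\modulus{C}\max_j\modulus{w_i(j)}<\modulus{C}\,M/n$ in absolute value. No other agent is affected. Hence $\SW(\assignment')>\SW(\assignment)$, since the total gain is at least $\modulus{C}M-\modulus{C}M/n>0$, and $\assignment$ is not optimal.

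Consequently, in every optimum $i$ sits alone at some $a^*\in A\cup\set{\void}$: either $i$ is at the void activity, or it is at an activity of $A$ whose coalition is $\set{i}$. This proves the claim, including the case $\weights=\nonneg$, as the next step shows. The one point needing care is that the statement allows $\weights\in\set{\arbitrary,\nonneg}$, so the negative weight $w_j(i)=-M$ is not admissible when $\weights=\nonneg$. For that case we move the penalty into preferences.

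For $\weights=\nonneg$, the others get all weights $0$, and each $j\neq i$ gets $p_j(a)=-M$ for every $a\in A$. Then any $j$ assigned to an activity of $A$ loses $M$, while moving it to $\void$ loses at most $\max_k\modulus{w_i(k)}<M/n$ in $i$'s declared utility. So in every optimum all other agents are at the void activity, and $i$ is alone at whatever activity in $A\cup\set{\void}$ it occupies. This preference-only construction in fact works uniformly for both weight ranges, so I would present just this one. The main obstacle was recognizing that the freedom must come from arbitrary preferences rather than weights. Beyond that, the only check is the inequality $M>n\max_k\modulus{w_i(k)}$, which follows directly from the definition of $M_{d_i}$.
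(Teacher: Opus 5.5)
Your final construction (all weights of the other agents set to $0$ and $p_j(a)=-M_{d_i}$ for every $j\in\agentsmini$ and $a\in A$) is exactly the paper's. Your exchange argument is correct: moving any $j$ placed in $A$ to $\void$ gains $M_{d_i}$ and costs at most $\modulus{w_i(j)}<M_{d_i}/n$. It is a slightly more explicit version of the paper's observation that any assignment placing another agent in $A$ has negative welfare. You were also right to discard the weight-based construction, since $w_j(i)=-M_{d_i}$ is not admissible when $\weights=\nonneg$.
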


\subsection{Optimality and NOM for Arbitrary Weights}
We are ready to explore the compatibility of optimality and non-obvious manipulability. We start with the case $\weights=\arbitrary$.

\begin{theorem}\label{thm:OPTisNOMarbitraryWeights}
If $\preferences\in\set{\arbitrary, \nonneg}$ and $\weights=\arbitrary$, any optimal mechanism $\mech$ satisfies NOM.
\end{theorem}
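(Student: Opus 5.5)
Condition~\ref{NOM:sup} of \Cref{def:NOM} is already settled by \Cref{cor:OptBest}. It therefore suffices to verify Condition~\ref{NOM:inf}. The plan is to compute the worst-case utility $\underline{u}(d_i)=\inf_{\declared_{-i}} u_i(\mech(\declared_{-i},d_i))$ for an arbitrary declaration $d_i$, and then show that truthful reporting maximizes this quantity.

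\textbf{Step 1: an upper bound on the worst case of any declaration.} Fix any $d_i$. By \Cref{lemma:NonemptyCoalitionArbitraryWeights}, for every activity $a^*\in A$ and every non-empty $C\subseteq\agentsmini$ there is a $\declared_{-i}$ forcing $i$ into $a^*$ with exactly $C$. Under the true valuation this gives $i$ utility $p^t_i(a^*)+\sum_{j\in C}w^t_i(j)$. Choosing $a^*=a_m$ and $C=\set{j^*}$ with $j^*\in\argmin_j w^t_i(j)$ already makes the infimum at most $p^t_i(a_m)+\min_j w^t_i(j)$. More generally, we may take $C$ to be the set of agents with negative true weight, if that set is non-empty. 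We also have \Cref{lemma:WorstActivity}, which forces $i$ to be alone in some activity outside a chosen set $A'$ of size $n-1$, or in $\void$. Taking $A'$ to be the top $n-1$ activities in the order $a_1,\dots,a_m$, this yields utility at most $p^t_i(a_n)$, with the convention $a_k=\void$ for $k>m$ and hence utility $0$ when $m<n$. Combining the two lemmas, every declaration satisfies
\[
\underline{u}(d_i)\le \min\Big\{p^t_i(a_n),\ \min_{\emptyset\neq C\subseteq\agentsmini,\,a\in A}\big(p^t_i(a)+\textstyle\sum_{j\in C}w^t_i(j)\big)\Big\}=:L .
\]

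\textbf{Step 2: truthful reporting attains $L$.} It remains to show that $\underline{u}(v_i)\ge L$. Suppose $i$ reports truthfully and let $\optOutcome$ be any optimum. There are two cases. If $i$ shares an activity in $A$ with a non-empty set $C$, her utility is at least the inner minimum in $L$. If $i$ is alone, her utility is $p^t_i(z_i)$, or $0$ if $z_i=\void$. In this second case we must show that she gets at least $p^t_i(a_n)$. The argument is a swapping one. Since $i$ is alone, there are at most $n-1$ activities other than hers that host other agents. Hence if $i$ sat in some activity worse than $a_n$, some activity among $a_1,\dots,a_n$, or $\void$ when $m<n$, would be unoccupied. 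Moving $i$ there alone would strictly increase her declared, and here truthful, utility without changing anyone else's utility. This contradicts optimality. The tie case poses no problem, since equal preference values give equal utility.

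\textbf{Main obstacle.} The delicate point is Step 2 when $m<n$. There $a_n=\void$, and $L\le 0$ should be matched by showing that a lone $i$ in an optimum never sits in an activity with negative $p^t_i$. This holds because either $\void$ is available or a free activity exists, and one must check that the swap argument covers this case uniformly. A second point to check is that $L$ combines both bounds correctly. When some true weights are positive, the minimum over non-empty $C$ is achieved by a singleton of minimum weight, or by all negative-weight agents, so the expression is well defined and matches the lower bound from Step 2 exactly.
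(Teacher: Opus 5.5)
\paragraph{Verdict.} Your overall plan matches the paper's: Condition~\ref{NOM:sup} from \Cref{cor:OptBest}, an upper bound on every worst case via \Cref{lemma:NonemptyCoalitionArbitraryWeights} (with $a_m$ and the minimum-weight coalition) and \Cref{lemma:WorstActivity} (with $A'=\set{a_1,\dots,a_{n-1}}$), and a matching lower bound for truthful reporting via swaps. However, your threshold $L$ is wrong when $\preferences=\arbitrary$, and the inequality claimed in Step~1 is false.

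\paragraph{Where Step 1 fails.} For $m\ge n$, \Cref{lemma:WorstActivity} only places $i$ alone somewhere in $(A\setminus A')\cup\set{\void}$. If she lands in $\void$ her utility is $0$, which exceeds $p^t_i(a_n)$ whenever $p^t_i(a_n)<0$. This is not just slack in the lemma. Take $n=m=2$, $p^t_1(a_1)=-1$, $p^t_1(a_2)=-5$, $w^t_1(2)=100$. Then $L=\min\set{-5,95}=-5$.

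Under truthful reporting, in any optimum agent $1$ falls into one of two cases:
\begin{itemize}
\item she shares an activity with agent $2$, giving utility at least $95$;
\item she is alone, and then she must be in $\void$, since moving a lone agent from an activity with negative declared (here, true) value to $\void$ changes only her own term.
\end{itemize}
Hence $\underline{u}(v_1)\ge 0>L$. This contradicts Step~1 for $d_1=v_1$, so the chain $\underline{u}(d_i)\le L\le\underline{u}(v_i)$ breaks.

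\paragraph{The fix.} Replace $p^t_i(a_n)$ in $L$ by $\bar p=\max\set{p^t_i(a_n),0}$, exactly as the paper does.
\begin{itemize}
\item Step~1 then holds, because any lone position in $(A\setminus A')\cup\set{\void}$ gives at most $\bar p$.
\item Step~2 must show that a lone $i$ gets at least $\bar p$. This combines your free-activity swap, which gives at least $p^t_i(a_n)$, with the void swap, which gives at least $0$.
\end{itemize}
You state the void swap only as a concern for the case $m<n$, but it is needed for every $m$. With this change your argument becomes the paper's proof. When $\preferences=\nonneg$ we have $\bar p=p^t_i(a_n)$, so your original $L$ is already correct there; the error is confined to arbitrary preferences.
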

\begin{proof}
Let $i$ be an arbitrary agent with truthful valuations $v_i=(p_i^t, w_i^t)$. Recall that we can assume
$p_i^t(a_1) \geq p_i^t(a_2) \geq \dots \geq p_i^t(a_m)$.  Moreover, for all $k \geq m+1$, $a_k = \void$.

Condition~\ref{NOM:sup} of \Cref{def:NOM} follows from \Cref{cor:OptBest}. We next prove Condition~\ref{NOM:inf} of \Cref{def:NOM}. 

Let $\bar{p}=\max\set{p_i^t(a_n), 0}$ and, denoted by $w^t_i(C)= \sum_{j\in C} w^t_i(j)$, where $C\subseteq \agentsmini$, let $C^*$ be any coalition s.t.
$$ C^* \in \argmin_{{\emptyset \neq C\subseteq\agentsmini}} w^t_i(C).$$

We first focus on the truthful declaration $v_i$ of $i$ and show 
\begin{align}\label{eq:LBinfArbitraryWeights}
    \inf\nolimits_{\declared_{-i}} u_i(\mech(\declared_{-i}, v_i)) \geq \min\set{p_i^t(a_m) + w_i^t(C^*), \bar{p}} \,.
\end{align}
For the sake of contradiction, suppose that there is an optimal outcome where $i$ gets the utility strictly lower than $u^{min}=\min\set{p_i^t(a_m) + w_i^t(C^*), \bar{p}} $. 
If $i$ is assigned to some activity $a\in A$ with only the agents from $C \subseteq \agentsmini$, where $C\neq \emptyset$, then, by definition of $C^*$ and the ordering of activities, we have
$w_i^t(C) \geq w_i^t(C^*)$ and $p_i^t(a) \geq p_i^t(a_m)$. Hence, such an assignment yields utility at least $w_i^t(C^*) + p_i^t(a_m)\geq u^{min}$ for agent $i$.
Therefore, to obtain a utility lower than $u^{min}$, $i$ must be assigned alone to some activity in $A \cup \set{\void}$. Assume that $i$ is assigned to $a_k$, hence, $p^t_i(a_k)< u^{min}\leq \bar{p}$.
Since the outcome is optimal, it must be that $p_i^t(a_k) \geq 0$; otherwise, assigning agent $i$ to the void activity would strictly increase the social welfare. This implies that $\max\{p_i^t(a_n), 0\}= \bar{p}> p^t_i(a_k) \geq 0$, and hence $p_i^t(a_n) > 0$. Therefore, $n \leq m$, as otherwise $a_n = \void$ and $ p_i^t(a_n) =0$. Moreover, $k > n$, as $p_i^t(a_n) > p_i^t(a_k)$. 
On the other hand, since there are only $n$ agents, there must exist some $h \leq n$ such that no agent is assigned to activity $a_h$. We can then reassign agent $i$ from $a_k$ to $a_h$, which would strictly increase the social welfare, as
$p_i^t(a_h) \geq p_i^t(a_n) > p_i^t(a_k)$. This contradicts the optimality of the original outcome and proves \Cref{eq:LBinfArbitraryWeights}.

Consider now any possible declaration $d_i$ of $i$, we next show 
\begin{align}\label{eq:UBinfArbitraryWeights}
    \inf\nolimits_{\declared_{-i}} u_i(\mech(\declared_{-i}, d_i)) \leq \min\set{p_i^t(a_m) + w_i^t(C^*), \bar{p}}\, .
\end{align}

By \Cref{lemma:NonemptyCoalitionArbitraryWeights}, for any $d_i$, there exists a $\declared_{-i}$ such that $i$ is assigned to one of her least preferred activities, without loss of generality $a_m$, together with exactly all the agents in $C^*$. Hence, $\inf\nolimits_{\declared_{-i}} u_i(\mech(\declared_{-i}, d_i)) \leq p_i^t(a_m) + w_i^t(C^*)$.

If we set $A' = \{a_1, \dots, a_{n-1}\}$ and apply \Cref{lemma:WorstActivity}, there exists $\declared_{-i}$ such that $i$ is assigned to an activity $a_h$, with $h \geq n$, alone, implying that agent $i$ gets a utility $p_i^t(a_k)\leq \max\set{p_i^t(a_n), 0}$. This shows $\inf\nolimits_{\declared_{-i}} u_i(\mech(\declared_{-i}, d_i)) \leq \bar{p}$.

These two facts together prove \Cref{eq:UBinfArbitraryWeights}.

In conclusion, putting together \Cref{eq:LBinfArbitraryWeights,eq:UBinfArbitraryWeights},  Condition~\ref{NOM:inf} of \Cref{def:NOM} holds true.
\end{proof}

It remains to consider the case where $\preferences=\binary$. Let us first show that not every optimal mechanism satisfies NOM.

\begin{example}
Assume $\preferences=\binary$ and $\weights=\arbitrary$. Assume we have an instance with two activities, $a_1$ and $a_2$, and two agents. Consider an optimal mechanism $\mech$ that uses the following tie-breaking rule: When $1$ (resp.\ $2$) values $1$ only $a_1$  (resp.\ $a_2$) and the sum of the respective weights is  $\geq 1$, then, if  $1$ values $0$ agent $2$, assign both the agents to $a_2$, otherwise, assign the agents to $a_1$. When $1$ values $1$ only $a_1$, and the sum of the respective weights is $<1$, only agent $1$ is assigned to $a_1$, the assignment of agent $2$ depends on her preferences. Note that the tie-breaking does not affect the optimality of the outcome.

Let agent $1$ have the following true values: $w^t_1(2) = 0$, $p^t_1(a_1) = 1$, $p^t_1(a_2) = 0$. The worst possible utility of the agent $1$ equals $0$. This may happen when agent $2$ declares $w_2(1) = 1$, $p_2(a_2) = 1$, $p_2(a_1) = 0$, $\mech$ will assign both agents $1$ and $2$ to $a_2$ because of the tie-breaking rule. 

Now, let us consider the declaration $d_1=(p_1, w_1)$ of agent $1$ identical to the true one except for $w_1(2)$, which is declared to be $0.1$. Then, regardless of $d_2$, $1$ will always be assigned to $a_1$ by the mechanism. In fact, if $w_1(2) + w_2(1)\geq 1$, both are assigned to $a_1$, otherwise, only agent $1$ is assigned to $a_1$.

Therefore, an optimal mechanism that uses the aforementioned tie-breaking rule does not satisfy NOM.
\end{example}

Despite the fact that there might exist optimal mechanisms that are obviously manipulable. With an appropriate definition of a tie-breaking rule, we can show the following.

\begin{restatable}{theorem}{existsOPTbinaryPreferences}\label{thm:existsOPTbinaryPreferences}
    If $\preferences=\binary$, $\weights=\arbitrary$, and $m \geq 2$, there exists an optimal mechanism that satisfies NOM.
\end{restatable}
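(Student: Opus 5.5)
Our plan is to fix a concrete tie-breaking rule for an optimal mechanism and then verify the two conditions of \Cref{def:NOM} separately. Condition~\ref{NOM:sup} holds for every optimal mechanism by \Cref{cor:OptBest}, so all the work is in Condition~\ref{NOM:inf}. The tie-breaking rule we propose is \emph{agent-wise favourable for preferences}: among all optimal assignments (w.r.t.\ $\declared$), select one lexicographically maximizing, for agents $1,\dots,n$ in turn, the declared preference $p_i(z_i)$ of the agent's assigned activity. The point is that, when $i$ truthfully declares a preferred activity (value $1$), ties between optimal outcomes never push her to a $0$-activity. This blocks the manipulation in the Example, where a fake positive weight was used to avoid being moved to $a_2$.

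For Condition~\ref{NOM:inf}, we will compute the truthful worst case and show that any $d_i$ can be pushed down to at most that value. Let $C^*\in\argmin_{\emptyset\ne C\subseteq\agentsmini}w^t_i(C)$ as in \Cref{thm:OPTisNOMarbitraryWeights}. For the upper bound under an arbitrary $d_i$, \Cref{lemma:NonemptyCoalitionArbitraryWeights} applies with $\preferences=\binary$, since its construction uses only preference values $M_{d_i}$ and $0$; we would replace $M_{d_i}$ on preferences by $1$ and let large weights do the work. It places $i$ in a worst activity $a_m$ with exactly $C^*$, so the infimum is at most $p^t_i(a_m)+w^t_i(C^*)$. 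An analogue of \Cref{lemma:WorstActivity} forces $i$ alone into an activity outside any prescribed set of $n-1$ activities, or into $\void$, giving the bound $\bar p=\max\{p^t_i(a_n),0\}$. This requires re-proving that lemma with binary preferences: the others declare preference $1$ on the $n-1$ activities of $A'$ and huge mutual negative weights, so they spread over $A'$, while $i$ is given large negative weights toward them.

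For the lower bound under truthful reporting, we repeat the argument of \Cref{thm:OPTisNOMarbitraryWeights}. If $i$ is with a nonempty coalition, her utility is at least $p^t_i(a_m)+w^t_i(C^*)$. If she is alone in some $a_k$ with $p^t_i(a_k)=0$ while $p^t_i(a_n)=1$, then some activity $a_h$ with $h\le n$ and $p^t_i(a_h)=1$ is empty, so moving her there strictly increases welfare, a contradiction.

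The main obstacle is the binary-specific subtlety: values are $0/1$, so the truthful worst case may be exactly tied with alternatives. In \Cref{thm:OPTisNOMarbitraryWeights} the void-versus-activity step relied on strict improvements, and here it must hold with the tie-breaking. We therefore expect to check carefully that the tie-breaking rule never yields, under truth, an outcome below $\min\{p^t_i(a_m)+w^t_i(C^*),\bar p\}$, and that the constructions for the upper bound work for every $d_i$ regardless of ties. If the preference-lexicographic rule fails here, we would instead break ties in favour of assignments where each agent is alone at a declared-$1$ activity.
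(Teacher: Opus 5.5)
You keep the paper's skeleton. The lower bound holds for every optimal outcome under truthful reporting. The upper bound is meant to come from forcing constructions giving $\inf_{\declared_{-i}}u_i(\mech(\declared_{-i},d_i))\le L:=\min\set{p^t_i(a_m)+w^t_i(C^*),\bar p}$ for every $d_i$. With your tie-breaking rule this upper bound is false, so the sandwich cannot close.

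With binary preferences, two welfare ties cannot be removed by ``letting large weights do the work'', because weights do not depend on the activity:
(i) if $C^*=\set{j^*}$ and $d_i$ has $p_i(a_m)=0$, $p_i(a')=1$, then $\set{i,j^*}$ on $a_m$ (with $p_{j^*}(a_m)=1$) and $\set{i,j^*}$ on $a'$ have equal welfare;
(ii) in your analogue of \Cref{lemma:WorstActivity}, $i$ displacing a single blocker from an activity she declares $1$ trades a $1$ for a $1$.

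A fixed-order lexicographic rule on declared preferences resolves both ties in favour of $i$ whenever $i$ precedes the agents involved, so the constructions cannot push her down to $L$. Concretely, let $n=m=2$ with agent $1$ first in your order, $p^t_1(a_1)=1$, $p^t_1(a_2)=0$, $w^t_1(2)=0$, so $L=0$. For every $d_2$ some optimum has $z_1=a_1$:
\begin{itemize}
\item if both agents sit on $a_2$, moving them to $a_1$ changes welfare by $1+p_2(a_1)-p_2(a_2)\ge 0$;
\item if $1$ is alone off $a_1$ while $2$ is alone on $a_1$, swapping them gives $1+p_2(a_2)\ge p_2(a_1)$;
\item in every other case, moving $1$ to $a_1$ is a strict gain.
\end{itemize}
Your rule selects such an optimum, hence $\inf_{\declared_{-1}}u_1(\mech(\declared_{-1},v_1))=1>L$. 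Your claimed upper bound already fails for $d_1=v_1$.

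The missing idea is a tie-breaking rule that the \emph{other} agents can always capture, whatever $d_i$ is. The paper breaks ties in favour of the declared utility of the agent with the largest $\max_{j'}\modulus{w_j(j')}$. The adversarial declarations use weights of magnitude $M_{d_i}>\max_{j}\modulus{w_i(j)}$, so $j^*$ (who declares $p_{j^*}(a_m)=1$) wins tie (i), and the blockers with $\hat p_j(a_j)=1$ win tie (ii). The lower bound is untouched, because it holds for every optimal outcome.

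Your fallback rule, which favours agents alone at a declared-$1$ activity, has the same defect, since it again protects $i$. Your lexicographic rule may still turn out to be NOM: for highly ranked $i$ the truthful worst case can exceed $L$, which is harmless for NOM. Proving it, however, would require computing the truthful worst case as a function of $i$'s rank, and of how many activities the others can block strictly (using pairs) or only through ties won by higher-ranked agents. You would then have to show that truthful reporting maximizes that quantity, which is a different argument from the one you outline.
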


The proof of the theorem relies on the selection of the tie-breaking rule and follows a similar approach of the proof \Cref{thm:OPTisNOMarbitraryWeights}; we refer to the appendix for the details.

\subsection{Optimality and NOM for Non-negative Weights}
In \cite{flammini2022approximate} it has been shown that when $\preferences=\nonneg$, $\weights=\nonneg$ and $m=1$, any optimal mechanism is SP. We next focus only on the case where $m\geq 2$.

\begin{restatable}{theorem}{OPTisNOMnonNegativeWeights}\label{thm:OPTisNOMnonNegativeWeights}
    If $\preferences\in\set{\arbitrary, \nonneg}$, $\weights=\nonneg$, and $m\geq 2$, any optimal mechanism $\mech$ satisfies NOM.
\end{restatable}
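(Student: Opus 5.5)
By \Cref{cor:OptBest}, Condition~\ref{NOM:sup} of \Cref{def:NOM} already holds for every optimal mechanism, so only Condition~\ref{NOM:inf} needs work. I would follow the template of the proof of \Cref{thm:OPTisNOMarbitraryWeights}: identify an explicit value $u^{min}$, prove that truthful reporting guarantees at least $u^{min}$, and then prove that under any declaration $d_i$ some $\declared_{-i}$ pushes $i$ down to at most $u^{min}$. Sort activities so that $p^t_i(a_1)\ge\dots\ge p^t_i(a_m)$. Since weights are non-negative, the minimum of $w^t_i(C)$ over non-empty $C\subseteq\agentsmini$ is attained by a singleton. Let $j^*\in\argmin_{j\neq i} w^t_i(j)$ and $\underline{w}=w^t_i(j^*)$. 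The candidate is
\[
u^{min}=\min\set{p^t_i(a_m)+\underline{w},\ \max\set{p^t_i(a_1),0}}.
\]

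\emph{Lower bound under truthful reporting.} Fix an optimal outcome for $(v_i,\declared_{-i})$. If $i$ shares some $a\in A$ with a non-empty $C$, her true utility is at least $p^t_i(a_m)+\underline{w}$, because $p^t_i(a)\ge p^t_i(a_m)$ and every weight is non-negative. If $i$ is alone, \Cref{lemma:OptSingle} (applied with $d_i=v_i$) places her on an activity in $\argmax_{a\in A\cup\set{\void}}p^t_i(a)$. Her utility is then exactly $\max\set{p^t_i(a_1),0}$. Either way she gets at least $u^{min}$.

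\emph{Upper bound under any misreport $d_i$.} First apply \Cref{lemma:NonemptyCoalition} (this is where $m\ge 2$ is used) with $C=\set{j^*}$ and $a^*=a_m$. It gives a $\declared_{-i}$ under which every optimum puts $i$ on $a_m$ with $j^*$ only, so $\inf_{\declared_{-i}}u_i(\mech(\declared_{-i},d_i))\le p^t_i(a_m)+\underline{w}$. This settles every case where $u^{min}=p^t_i(a_m)+\underline{w}$. The remaining case is $\max\set{p^t_i(a_1),0}<p^t_i(a_m)+\underline{w}$. For $\preferences=\arbitrary$, \Cref{lemma:AloneToActivity} gives a $\declared_{-i}$ under which $i$ is alone on some activity of $A\cup\set{\void}$. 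Her true utility is then at most $\max\set{p^t_i(a_1),0}$, so the bound follows.

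\emph{Main obstacle.} The delicate step is this remaining case when $\preferences=\nonneg$. Here \Cref{lemma:AloneToActivity} is unavailable, and others cannot repel $i$ using non-negative weights. My approach would be to refine the truthful lower bound rather than the upper bound. Under truth, the second branch of $u^{min}$ is only relevant if some $\declared_{-i}$ actually leaves $i$ alone. I would show that whenever $\max\set{p^t_i(a_1),0}<p^t_i(a_m)+\underline{w}$, every optimum under truthful reporting puts $i$ in a non-empty coalition. Indeed, moving some agent $j$ next to $i$ (for instance, one that is itself alone or in a low-value spot) changes welfare by at least $w^t_i(j)\ge 0$ plus $j$'s own non-negative terms. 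With $m\ge2$ and all preferences non-negative, a case analysis on where the other agents sit should either exhibit a strict welfare improvement or show that a tie-break keeping $i$ alone already gives her the maximum $p^t_i(a_1)$. If that case analysis fails, I would instead build a $\declared_{-i}$ in the style of \Cref{lemma:NonemptyCoalition} that forces, for every $d_i$, an outcome no better for $i$ than the worst truthful one.
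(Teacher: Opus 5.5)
Your skeleton is the paper's. You use the same $u^{min}$ and the same lemmas: \Cref{cor:OptBest}, \Cref{lemma:OptSingle}, \Cref{lemma:NonemptyCoalition} and \Cref{lemma:AloneToActivity}. So the case $\preferences=\arbitrary$ is complete and coincides with the paper's proof. For $\preferences=\nonneg$ you also name the right target, which is exactly what the paper proves: if $p^t_i(a_1)<p^t_i(a_m)+\underline{w}$, no optimum under truthful reporting leaves $i$ alone. The truthful lower bound then rises to $p^t_i(a_m)+\underline{w}$, and \Cref{lemma:NonemptyCoalition} alone closes the argument. This refinement is forced, not optional. Suppose $d_i$ declares, towards every other agent, a weight exceeding $\max_{a\in A}p_i(a)$. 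Then $i$ is never alone in any optimum, so no $\declared_{-i}$ pushes her below $p^t_i(a_m)+\underline{w}$, and your fallback construction cannot help. But you leave exactly this claim unproved, and the move you propose for it does not work.

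Moving some $j$ next to $i$ does not change welfare by $w^t_i(j)$ plus non-negative terms. Agent $j$ forfeits its preference for its current activity and all weights within its current coalition. For instance, take $n=m=2$, $p^t_i(a_1)=p^t_i(a_2)=w^t_i(j)=1$, and $j$ declaring $p_j(a_2)=100$, $p_j(a_1)=w_j(i)=0$. The outcome with $i$ alone on $a_1$ and $j$ alone on $a_2$ is not optimal, yet moving $j$ to $a_1$ loses $99$.

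Your other escape, a tie-break that keeps $i$ alone with utility $p^t_i(a_1)$, is not acceptable either. In this case that value lies strictly below $p^t_i(a_m)+\underline{w}$, the only bound available under misreports, so you need strictness.

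The missing idea is to move $i$ instead. Being alone, she gives up only $p^t_i(a_1)$ and no weights. Relocating her into any non-empty coalition $C$ on some other activity $a'$ changes welfare by $p^t_i(a')-p^t_i(a_1)+w^t_i(C)+\sum_{j\in C}w_j(i)\geq p^t_i(a_m)+\underline{w}-p^t_i(a_1)>0$, using that the others' weights are non-negative. If no other agent occupies a non-void activity, they are all on $\void$. Then moving one of them next to $i$ does work, gaining at least $\underline{w}>0$, which is positive because $p^t_i(a_1)\geq p^t_i(a_m)$. The same computations cover $i$ alone on $\void$, which by \Cref{lemma:OptSingle} only happens when $p^t_i(a_1)=0$. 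This strict improvement is the step the paper's proof supplies.
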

\begin{proof}[Proof of the case $\preferences=\arbitrary$.]
Here we assume $\preferences=\arbitrary$ and defer the case where $\preferences=\nonneg$ to the supplemental material.

    Condition~\ref{NOM:sup} of \Cref{def:NOM} follows from \Cref{cor:OptBest}. We next show that Condition~\ref{NOM:inf} of \Cref{def:NOM} is also satisfied.

    Let $ j^* \in \argmin_{j \in \agents} w^t_i(j)$. Recall can assume $a_1$ is the most preferred while $a_m$ is the least preferred activity according to $i$'s truthful declaration. We define $\tilde{p}= \max\set{p^t_i(a_1), 0}$.

    We first consider the truthful report $v_i=(p^t_i, w^t_i)$ and show    \begin{align}\label{eq:LBinfNNweights}
       \inf\nolimits_{\declared_{-i}} u_i(\mech(\declared_{-i}, v_i)) \geq \min\set{p_i^t(a_m) + w_i^t(j^*), \tilde{p}}\, .
    \end{align}
    Since $\weights = \nonneg$, and $a_m$ is the least preferred activity of $i$, and, by the definition of $j^*$, for any $\emptyset \neq C \subseteq \agentsmini$ and $a \in A$, it holds that $p_i^t(a) + w_i^t(C)\geq p_i^t(a_m) + w_i^t(j^*)$, as soon as $i$ is assigned to some activity with at least one other agent, she has a utility of at least $p_i^t(a_m) + w_i^t(j^*)$. 
   To obtain a lower utility, agent $i$ must be assigned alone to some, possibly void, activity. From \Cref{lemma:OptSingle}, such an activity can only be an activity in $\argmax_{a\in A \cup \set{\void}} p_i^t(a)$, which means an activity having value $p_i^t(a_1)$, if its value is non-negative, otherwise, $i$ will be assigned to the void activity. Therefore, if $i$ is assigned alone to some activity, she gets a utility of at least $\tilde{p}$.
   
In conclusion, \Cref{eq:LBinfNNweights} holds true.

    Let us now consider an arbitrary declaration $d_i$ of agent $i$. Let $C = \set{j^*}$ and $a^*= a_m$. By \Cref{lemma:NonemptyCoalition}, there exists a declaration of the other agents such that, in any optimal assignment, agent $i$ is assigned to activity $a_m$ together with exactly agent $j^*$.
    By \Cref{lemma:AloneToActivity}, for any $d_i$ there exists  $\declared_{-i}$ such that $i$ is assigned alone to some, possibly void, activity. 
    This means she can always get a utility lower than or equal to $\tilde{p}$.
    Hence, when declaring an arbitrary $d_i$ we have
    \begin{align}\label{eq:UBinfNNweights}
    \inf\nolimits_{\declared_{-i}} u_i(\mech(\declared_{-i}, d_i)) \leq \min\set{p_i^t(a_m) + w_i^t(j^*), \tilde{p}} \, .
    \end{align}
    
\Cref{eq:LBinfNNweights,eq:UBinfNNweights} show Condition~\ref{NOM:inf} holds true.
\end{proof}

In contrast, when $\preferences=\binary$, we have the following:
\begin{restatable}{theorem}{noOPTbinaryPreferences}\label{thm:noOPTbinaryPreferences}
    If $\preferences=\binary$, $\weights=\nonneg$, and $m \geq 2$, no optimal mechanism satisfies NOM.
\end{restatable}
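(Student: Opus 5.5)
Since \Cref{cor:OptBest} already gives Condition~\ref{NOM:sup} for every optimal mechanism, the plan is to show that Condition~\ref{NOM:inf} of \Cref{def:NOM} fails for every optimal mechanism, whatever tie-breaking rule it uses. Concretely, I will fix an agent $i$ and a truthful valuation $v_i$, together with a misreport $d_i$, such that
$$\inf\nolimits_{\declared_{-i}} u_i(\mech(\declared_{-i}, d_i)) > \inf\nolimits_{\declared_{-i}} u_i(\mech(\declared_{-i}, v_i)).$$
For the strict inequality to hold for \emph{every} optimal $\mech$, the bad profile under truth must make the bad outcome the \emph{unique} optimum. Under the misreport, the lower bound must hold in \emph{all} optima.

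For the truthful side, I take $p^t_i(a_1)=1$, $p^t_i(a)=0$ for all other $a$, and $w^t_i(j)=\varepsilon$ for every $j\neq i$, with $0<\varepsilon<1/n$. I then build $\declared_{-i}$ in the spirit of \Cref{lemma:NonemptyCoalition}. A chosen agent $j$ declares $p_j(a_2)=1$ and $w_j(i)=K$ with $K$ large, and all of $j$'s other values are $0$. Every other agent $k$ declares $p_k(a_1)=1$ and all other values $0$. The steps are then as follows.
\begin{enumerate}
\item Every optimum places $i$ and $j$ together, because $K$ dominates all other terms.
\item The pair $\set{i,j}$ must sit in $a_2$: moving it to $a_1$ loses $1$ for $j$, and gains at most $1+\varepsilon(n-2)<2$ for $i$, which I will check is not enough to compensate.
\item Each $k$ is placed in $a_1$ away from $\set{i,j}$, since joining $\set{i,j}$ would cost $k$ its preference of $1$ and gain nothing.
\end{enumerate}
In this unique optimum, agent $i$'s true utility is $0+\varepsilon=\varepsilon$, so $\inf\nolimits_{\declared_{-i}} u_i(\mech(\declared_{-i},v_i))\leq\varepsilon$. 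The exact inequalities on $K$ and $\varepsilon$ still have to be verified, but this is routine.

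For the misreport, I let $d_i$ keep the true preferences but declare $w_i(j)=M$ for all $j$, with $M>n$ (for instance $M=M_{v_i}$). The key claim is that in every optimum of $(d_i,\declared_{-i})$, for any $\declared_{-i}$, all agents are in the same activity of $A$ as $i$. Because weights are non-negative, merging the coalition of $i$ with the coalition (or singleton) of any agent $j$ never decreases any weight term, and it raises $i$'s declared weight sum by at least $M$. It loses at most $n$ in total binary preferences, since every agent's preference lies in $\set{0,1}$. Hence any optimum in which someone is separated from $i$ can be strictly improved. Consequently, $i$'s true utility is at least $\varepsilon(n-1)$, which exceeds $\varepsilon$ as soon as $n\geq 3$. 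This violates Condition~\ref{NOM:inf}.

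The main obstacle is the case $n=2$, where the construction above gives only equality, $\varepsilon$ versus $\varepsilon$. For two agents, I would instead try to exploit the activity choice. The truthful worst case should have $i$ together with $j$ in $a_2$ as the unique optimum. The misreport should be an inflated weight combined with the preference structure, chosen so that any pairing forced by $d_i$ ends in $a_1$, or so that $i$ is alone in $a_1$. What I would try first is to pick true weights so that being paired in $a_2$ is strictly worse for $i$ than being alone in $a_1$. I would then look for a declaration that makes ``together in $a_2$'' strictly suboptimal under every $d_j$. Doing this without tie-breaking freedom is where the binary preferences bite, and it may require choosing the true valuation more carefully, for example $w^t_i(j)=0$ with a misreported weight in $(0,1)$.
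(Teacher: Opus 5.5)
Your misreport half is correct and matches the paper's argument, where the misreport is weight $4$ with $n=3$. The gap is on the truthful side. Step~2 is false, so your profile does not give $i$ utility $\varepsilon$.

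Compare two outcomes.
\begin{itemize}
\item The intended outcome puts $\set{i,j}$ in $a_2$ and every $k$ in $a_1$. Its declared welfare is $K+(n-1)+\varepsilon$.
\item Putting everybody in $a_1$ gives declared welfare $K+(n-1)+(n-1)\varepsilon$.
\end{itemize}
Moving the pair costs $j$ exactly $1$ and gains $i$ exactly $1+(n-2)\varepsilon$. You compared this gain with $2$ instead of with the loss $1$. Hence for $n\ge 3$ the \emph{unique} optimum puts everyone in $a_1$, which gives $i$ her maximum utility $1+(n-1)\varepsilon$. For $n=2$ the two placements tie, so an optimal mechanism may choose $a_1$.

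This is structural, not a matter of tuning $K$ and $\varepsilon$. With binary preferences, a single partner can offset $i$'s unit preference for $a_1$ by at most $1$. So ``$i$ with only $j$ in $a_2$'' is never strictly optimal. Take $n=3$, $m=2$ and your $v_i$. For every $\declared_{-i}$, every optimum gives $i$ at least $2\varepsilon$. Indeed, if $i$ sits in $a_2$ with one partner, the third agent is in $a_1$ or void, and merging strictly increases welfare by at least $\varepsilon$. Now let both others declare preference $1$ for $a_2$. Then your misreport forces everyone into $a_2$, and $i$ gets exactly $2\varepsilon$. So your misreport produces no violation there.

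What is missing is a truthful valuation whose bad outcome survives every tie-breaking rule. The paper achieves this by making $i$ indifferent between activities: $n=3$, $m=2$, $p^t_i(a_1)=p^t_i(a_2)=1$, and $w^t_i\equiv\varepsilon<1/2$.
\begin{itemize}
\item Then $i$ is never alone in an optimum, since joining another agent keeps her preference value and adds $\varepsilon$.
\item Let the other two agents declare zero weights and prefer $a_1$ and $a_2$ respectively. There are two optima, and both give $i$ exactly $1+\varepsilon$.
\item Uniqueness is not needed, contrary to what you assert; it suffices that \emph{every} optimum is bad for $i$.
\item Your inflated-weight misreport forces all three agents together and guarantees $1+2\varepsilon$.
\end{itemize}

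If you want to keep your $v_i$, you need $n\ge 4$. Replace $j$ by two agents $j_1,j_2$ who prefer $a_2$ and are tied to each other and to $i$ by huge weights. Let the remaining agents prefer $a_1$. The group's preference loss $2$ then exceeds $i$'s gain $1+(n-3)\varepsilon$. The unique optimum gives $i$ only $2\varepsilon<(n-1)\varepsilon$.

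Finally, the case $n=2$ is unnecessary: to show that no optimal mechanism is NOM, a counterexample at a single instance size suffices.
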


\begin{proof}
  Let us consider a game instance with two activities, $a_1$ and $a_2$, and three agents. Let  the truthful values of agent $3$ be as follows: $w^t_3(1) = w^t_3(2) = \varepsilon$, with  $0 < \varepsilon < \frac{1}{2}$, and $p^t_3(a_1) = p^t_3(a_2) = 1$.
    Since any activity would give this agent the utility $1$ and she has positive weights for both other agents, in any optimum, she gets at least $1 + \varepsilon$. 
    Moreover, such a utility is attained when $3$ is reporting truthfully: if other agents declare $w_1(3) = w_2(3) = 0$, $w_1(2) = w_2(1) = 0$, $p_1(a_1) = 1$, $p_1(a_2) = 0$, $p_2(a_1) = 0$, $p_2(a_2) = 1$, then there are two optima. In the first one, $1$ and $3$ are assigned to $a_1$, while $2$ is assigned to $a_2$, and in the second one, $2$ and $3$ are assigned to $a_2$, while $1$ is assigned to $a_1$. In both of the cases, the agent $3$ gets the utility $1 + \varepsilon$.

    Now, let us consider a declaration $d_3$ of $3$: $w_3(1) = w_3(2) = 4$, and $p_3(a_1) = p_3(a_2) = 1$. Here, regardless of the declarations of $1$ and $2$, in any optimum $1$, $2$, and $3$ will be assigned to the same activity (since $1$ and $2$ cannot have negative weights and their preferences over activities cannot exceed $1$). Therefore, in any optimum, agent $3$ will have the utility $1 + 2\varepsilon$, which means that $d_3$ improves the worst case of $3$ and thus is an obvious manipulation.
\end{proof}

\subsection{Optimality and NOM for Binary Weights}
In this section, we consider binary weights. 
\begin{restatable}{theorem}{noOPTbinaryWeights}\label{thm:noOPTbinaryWeights}
    If $\preferences\in\set{\arbitrary,\nonneg}$, $\weights=\binary$, and $m \geq 2$, no optimal mechanism satifies NOM.
\end{restatable}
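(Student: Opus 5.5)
The plan is to build one small instance with binary weights, exactly in the style of the proof of \Cref{thm:noOPTbinaryPreferences}. In it, some agent has a misreport that guarantees her a strictly better worst case than truthful reporting, and this holds for \emph{every} optimal mechanism, whatever its tie-breaking. Since Condition~\ref{NOM:sup} holds for every optimal mechanism by \Cref{cor:OptBest}, the violation has to come from Condition~\ref{NOM:inf}.

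I would take $m=2$ activities $a_1,a_2$ (extra activities can be made irrelevant by giving them value $0$ for everyone) and three agents. Agent $1$ has true values $p^t_1(a_1)=1$, $p^t_1(a_2)=0$ and $w^t_1(2)=w^t_1(3)=0$. These values are admissible when $\preferences\in\set{\arbitrary,\nonneg}$ and $\weights=\binary$. Because her true weights are zero, her utility is just her true preference for the activity she receives.

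\emph{Truthful worst case.} Let agents $2$ and $3$ declare $p_j(a_2)=M$ for a large $M$, $p_j(a_1)=0$, $w_j(1)=1$, and $w_2(3)=w_3(2)=1$. Every optimum then puts $2$ and $3$ together on $a_2$, since separating them or moving them to $a_1$ loses on the order of $M$. The remaining question is where agent $1$ goes:
\begin{itemize}
\item joining them on $a_2$ adds $0+1+1=2$ to the welfare;
\item going to $a_1$ adds only $1$;
\item the void activity adds $0$.
\end{itemize}
So every optimal outcome strictly places agent $1$ on $a_2$, and her true utility is $0$. Hence $\inf_{\declared_{-1}}u_1(\mech(\declared_{-1},v_1))\le 0$ for any optimal $\mech$.

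\emph{Manipulation.} Let agent $1$ declare $d_1$ with $p_1(a_1)=3$, $p_1(a_2)=0$ and zero weights, which is admissible for nonnegative preferences. I claim that for every $\declared_{-1}$, every optimum places agent $1$ on $a_1$. Take any assignment in which $1$ is elsewhere, and modify it so that $1$ is alone on $a_1$ and the agents previously on $a_1$ are moved with her or kept as they were. The clean way is the exchange below, and it is the step I expect to need the most care, because agent $1$ must end up \emph{on} $a_1$, not alone.

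Start from an assignment $\assignment$ with $z_1\neq a_1$ and move only agent $1$ to $a_1$, keeping everyone else fixed. Agent $1$'s declared contribution rises by $3$, because her weights are zero. The other agents lose at most the weights they assigned to $1$ in her old coalition; these are binary, so the loss is at most $2$. Agents already on $a_1$ only gain, since weights are nonnegative. The welfare therefore increases strictly, so no optimum has $z_1\neq a_1$.

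With respect to her true values, agent $1$ then always obtains utility $1$. So $\inf_{\declared_{-1}}u_1(\mech(\declared_{-1},d_1))=1>0$, which violates Condition~\ref{NOM:inf}. The argument extends to any $n$ by replacing $3$ with any declared value $p_1(a_1)>n-1$ and padding the other agents with zero values.
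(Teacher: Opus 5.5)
Your proof is correct and is essentially the paper's argument. Both use one small instance in which binary weights cap how strongly the others can pull agent $1$ away from $a_1$, so declaring $p_1(a_1)$ above that cap pins her to $a_1$ in every optimum and strictly raises her worst case; the paper uses two agents with $p^t_1(a_1)=1.5$, $w^t_1(2)=1$ and the misreport $p_1(a_1)=4$, where you use three agents with zero true weights. One minor caveat: your closing remark that the instance extends to every $n$ fails for $n=2$, since under truthful reporting joining agent $2$ on $a_2$ and going to $a_1$ tie, but lowering $p^t_1(a_1)$ to, say, $\tfrac12$ repairs this.
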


The case of binary preferences and weights remains open. In  \Cref{sec:furtherResults} we show that not all optimal mechanism satisfies NOM; however, it is unknown whether, with a suitable tie-breaking rule, optimality and NOM can be made compatible.

%% file: apx_nom.tex
\section{Approximate Mechanisms}
In the previous section, we showed that as long as preferences and weights take values in $\arbitrary$ or $\nonneg$, any optimal mechanism is NOM\footnote{It remained open the case where $m=1$, $\preferences=\arbitrary$ and $\weights=\nonneg$.}. Unfortunately, \cite{bilo2019optimality} showed that computing an optimal outcome for the \gasp\ is \classNP-hard for $m \geq 3$, even when $\preferences = \weights = \binary$. 

In this section, we turn to the approximability of the problem, possibly in conjunction with non-obvious manipulability.

\subsection{Arbitrary Preferences and Weights}
We begin by establishing a strong inapproximability result for the general problem. 
In \cite{bilo2019optimality}, the authors focused only on the case where all values are non-negative, providing a constant approximation. We next show that, in contrast, no bounded approximation is possible when preferences and weights may take negative values.
Our proof is based on a reduction from 
the \textsc{MaxClique} problem which is \classNP-hard~\cite{karp2009reducibility} and can be formulated as follows:
{
\begin{center}
\fbox{
\begin{varwidth}{0.95\linewidth}
\noindent {\bf \textsc{MaxClique}\ problem}

\smallskip
\noindent\textit{Input:} An undirected and unweighted graph $G=(V,E)$.

\smallskip
\noindent\textit{Problem:} To find a largest subset $V'\subseteq V$ such that the induced graph of $V'$ is a clique.
\end{varwidth}
}
\end{center}
}

\begin{theorem}\label{thm:unboundedAPX}
If $\preferences = \arbitrary$ and $\weights = \arbitrary$, then no polynomial-time algorithm can guarantee a bounded approximation of $\opt$, even when $m = 1$, unless $\classP=\classNP$.
\end{theorem}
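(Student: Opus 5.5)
We prove the statement by a gap reduction from the decision version of \textsc{MaxClique}: given $G=(V,E)$ and $k\in\NN$, decide whether $G$ has a clique of size at least $k$. This version is \classNP-hard, and it is polynomially equivalent to \textsc{MaxClique} because we can try every $k\le |V|$. The plan is to build, in polynomial time, an \gasp\ instance with a single activity $a$ ($m=1$) such that
\begin{itemize}
\item $\opt>0$ if $G$ has a clique of size $k$, and
\item $\opt=0$ otherwise.
\end{itemize}
Suppose an algorithm guarantees a bounded ratio $r$, meaning its output $\assignment$ satisfies $\SW(\assignment)\ge \opt/r$ whenever $\opt>0$. Then it must return an assignment of strictly positive welfare exactly on yes-instances. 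On no-instances every assignment has welfare at most $0$. Checking the sign of $\SW(\assignment)$ therefore decides the clique problem, so $\classP=\classNP$.

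\textbf{Construction.} Take one agent per vertex, $\agents=V$ with $n=|V|$, and let $M=n^2+1$. Set the weights to $w_i(j)=1$ if $\{i,j\}\in E$ and $w_i(j)=-M$ otherwise. Set every preference to $p_i(a)=-(k-1)+\tfrac12$. With one activity, an assignment is determined by the set $S\subseteq V$ of agents placed on $a$; all other agents take the void activity and contribute $0$. Hence $\SW(\assignment)=|S|\,p + \sum_{i\in S}\sum_{j\in S\setminus\{i\}} w_i(j)$, where $p$ is the common preference value. The empty set gives $0$, so $\opt\ge 0$ always.

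\textbf{Key computation.} The core step is to show that the sign of the welfare of $S$ is determined by its clique structure. There are two cases.
\begin{itemize}
\item If $S$ is a clique with $|S|=s$, then $\SW=s\bigl(s-1-(k-1)+\tfrac12\bigr)=s\bigl(s-k+\tfrac12\bigr)$. This is positive if and only if $s\ge k$.
\item If $S$ contains a non-adjacent pair, that pair contributes $-2M$. The positive weight terms total at most $s(s-1)<n^2$, and the preference terms contribute nonpositively when $k\ge 2$; the case $k=1$ is trivial. Hence $\SW<n^2-2M<0$.
\end{itemize}
It follows that $\opt>0$ if and only if $G$ contains a clique of size at least $k$, and $\opt=0$ otherwise. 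The $\tfrac12$ offset is what makes this separation strict.

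\textbf{Main obstacle.} The delicate point is not the reduction itself but the meaning of a ``bounded approximation'' when welfare can be nonpositive. We must state explicitly that an $r$-approximation, for $r<\infty$, has to output welfare at least $\opt/r>0$ whenever $\opt>0$. An output with $\SW\le 0$ makes the ratio infinite, or meaningless because it is negative. This observation is exactly what converts the approximation algorithm into an exact decision procedure. It also explains why no finite ratio is possible rather than only a specific one.
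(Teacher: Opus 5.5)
Your proposal is correct and follows the paper's proof: a reduction from \textsc{MaxClique} with a single activity, weight $1$ on edges, a large negative weight on non-edges (the paper uses $-n^3$), and a uniform negative preference tuned to $k$, so that $\opt>0$ exactly when a clique of size at least $k$ exists. The differences are cosmetic. The paper takes $p_i(a)=-(k-2)$, which already separates strictly, so your $\tfrac12$ offset is one valid choice rather than a necessity. Your final step of checking the sign of the returned welfare is slightly cleaner than the paper's claim that the algorithm returns a nonempty coalition exactly on yes-instances, since in the paper's construction a clique of size $k-1$ has welfare $0$.
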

\begin{proof}
Let us consider an instance of the \textsc{maxClique} problem, which consists of an undirected graph $G = (V, E)$. We can assume that $E\neq \emptyset$ and $V=\set{1,\dots, n}$. 

For any fixed $k\leq \modulus{V}$, we construct an \gasp\ instance $\instance^k$ as follows. We have a set of agents $\agents = V$, and we introduce a single activity $a$. We also set:
\begin{itemize}
    \item for every $i, j \in \agents$ with $i\neq j$: $w_i(j) = w_j(i) = 1$, if $\set{i, j} \in E$, and $w_i(j) = w_j(i) = -n^3$, otherwise;
    \item for each $i \in \agents$: $p_i(a) = - (k -2)$, with $k\in \NN$.
\end{itemize}
Consider now any outcome $\assignment$. Since there exists a unique activity $a$, there also exists a unique group of agents $C\subseteq\agents$ such that the agents in $C$ are assigned to $a$ in $\assignment$, while the remaining agents $\agents\setminus C$ are assigned to $\void$.
Observe that if there exist two distinct $i,j\in C$ with $(i,j)\not\in E$, then,  $\SW(\assignment)< 0$. 
Hence, an outcome $\assignment$ can have positive social welfare only if the set of agents in $C$ form a clique in $G$. As a consequence, with $c$ denoting the size of $C$, we have $\SW(\assignment) = -(k-2)\cdot c+ c(c-1) $; therefore, $\SW(\assignment)>0 $ if and only if $c> k -1$, which means, if and only if there exists a clique of size at least $k$ in $G$.

Assume we have a poly-time algorithm with bounded approximation. On the instance $\instance^k$, it assigns the agents of some non-empty $C\subseteq \agents$ to $a$, if and only if a clique of size at least $k$ exists in $G$, otherwise, all the agents are assigned to $\void$. Then, we can use such an algorithm to determine in poly-time what the size of the largest clique in $G$ is by testing all $k\in [n]$. This is not possible unless $\classP = \classNP$.
\end{proof}

\subsection{Non-negative Preferences and Arbitrary Weights}
Given the negative result above, we now show that as soon as $\preferences=\nonneg$ and $\weights=\arbitrary$, a bounded approximation is possible. In particular, we present an asymptotically optimal approximation mechanism that is also guaranteed to be NOM.

\begin{mechanism}\label{mech:matching}
Informally, at each step, the mechanism selects either a single agent or a pair of agents that have not yet been assigned to any activity and assigns them to an activity that has not yet been used. The selection is made greedily, in the sense that among all available options, the one that yields the greatest marginal improvement to the utilitarian welfare is chosen. This process continues as long as there remain unassigned pairs of agents and unused activities. If some agents remain unassigned, while all the activities have been used, they will be assigned to the void activity. 

A formal description of the mechanism 
is given in Algorithm~\ref{algo:matching}. Note that, when we use the notation $x\gets X$, for some set $X$ and an element $x\in X$, an element $x$ is chosen arbitrarily from $X$.
\end{mechanism}

\begin{algorithm}[t]
	\SetNoFillComment
	\DontPrintSemicolon
	\KwIn{An \gasp\ instance $\instance=(\agents, A, \declared)$}
	\KwOut{An assignment $\assignment$}
    $\agents'\gets \agents$, $A'\gets A$\\[0.3em]
    Initialize $\assignment$ so that $z_i=\void$, for all $i\in\agents$\\[0.3em]
    \While{$\modulus{\agents'}\geq 2\wedge A'\neq\emptyset$}{
    $x\gets \max\limits_{i \in \agents', a \in A'} p_i(a)$\\[0.3em]
    $y\gets \max\limits_{i, j\in \agents', a \in A'}w_i(j) + w_j(i) + p_i(a) + p_j(a)$\\[0.3em]
    \uIf{$x\geq y$}{
    \!\!$i,a\gets \argmax\limits_{i \in \agents', a \in A'} p_i(a)$ \label{line:argmax1}
     and 
    $z_i\gets a$\\[0.3em]
    \!\!$\agents'\gets\agents'\setminus\set{i}$, $A'\gets A'\setminus\set{a}$
    }\Else{
    $ \!\!i, j, a \gets \!\!\!\!\argmax\limits_{i, j\in \agents', a \in A'}w_i(j) + w_j(i) + p_i(a) + p_j(a)$\label{line:argmax2}\\[0.3em]
    \!\!$z_i\gets a$, $z_j\gets a$\\[0.3em]
    \!\!$\agents'\gets\agents'\setminus\set{i,j}$, $A'\gets A'\setminus\set{a}$
    }
    }
    \If{$\modulus{\agents'}=1\wedge A'\neq\emptyset$}{
    $i,a\gets \argmax\limits_{i \in \agents', a \in A'} p_i(a)$  
    and 
    $z_i\gets a$\\[0.3em]
    }
    \KwRet{$\assignment$}\\
    \caption{\Cref{mech:matching}.\label{algo:matching}}
\end{algorithm}

\begin{theorem}\label{thm:matchingNOMnonNeg}
If $\preferences=\nonneg$ and $\weights=\arbitrary$, \ref{mech:matching} is NOM.
\end{theorem}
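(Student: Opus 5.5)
The plan is to mirror the structure of the proof of \Cref{thm:OPTisNOMarbitraryWeights}. The key observation is that \Cref{mech:matching} only ever produces coalitions of size at most two: every agent is either alone on some activity, paired with exactly one other agent on some activity, or on $\void$. Throughout I sort activities by $i$'s true preferences, $p^t_i(a_1)\geq\dots\geq p^t_i(a_m)$, with $a_k=\void$ for $k>m$. I set $j^*\in\argmin_{j\neq i} w^t_i(j)$, $\hat{j}\in\argmax_{j\neq i} w^t_i(j)$, and $\bar{p}=p^t_i(a_n)\geq 0$ (recall $\preferences=\nonneg$, and $p^t_i(a_n)=0$ if $n>m$). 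The adversarial declarations $\declared_{-i}$ will use the large value $M_{d_i}$ from Eq.~(\ref{eq:M_d_i}), so that the greedy choices in lines~\ref{line:argmax1} and~\ref{line:argmax2} are dictated by the other agents and not by $d_i$.

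\textbf{Condition~\ref{NOM:sup}.} Since coalitions have size at most two and $p^t_i\geq 0$, under any declaration $i$'s utility is at most $U=p^t_i(a_1)+\max\{0,w^t_i(\hat j)\}$. It therefore suffices to show that $U$ is attained under truthful reporting for some $\declared_{-i}$.
\begin{itemize}
\item If $w^t_i(\hat j)>0$: let $\hat j$ declare $w_{\hat j}(i)=p_{\hat j}(a_1)=M_{v_i}$, and set all other weights to $-M_{v_i}$ and all other preferences to $0$. In the first iteration $y$ is attained uniquely by $(i,\hat j,a_1)$ and $y>x$, so $i$ gets $U$.
\item If $w^t_i(\hat j)\le 0$: set all other weights, including those towards $i$, to $-M_{v_i}$ and all other preferences to $0$. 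Then $x=p^t_i(a_1)\geq y$, and the single choice is $i$ on an activity of value $p^t_i(a_1)$. If $p^t_i(a_1)=0$, then every activity gives $i$ utility $0=U$, so any tie is harmless.
\end{itemize}

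\textbf{Condition~\ref{NOM:inf}, lower bound under truth.} I will show $\inf_{\declared_{-i}}u_i(\mech(\declared_{-i},v_i))\geq\min\{p^t_i(a_m)+w^t_i(j^*),\bar p\}$ by cases on $i$'s fate.
\begin{itemize}
\item If $i$ is paired with some $j$ on some $a$, her utility is at least $p^t_i(a_m)+w^t_i(j^*)$.
\item If $i$ is placed alone, this happens at line~\ref{line:argmax1} or in the final \textbf{if}, where the activity is an argmax of her true (declared) preference among the unused activities. At most $n-1$ activities have been consumed by the others at that point, so some $a_h$ with $h\le n$ is still free. Hence her utility is at least $p^t_i(a_n)=\bar p$.
\item If $i$ ends on $\void$, then either the activities were exhausted, which requires $m\le n-1$ so that $\bar p=0$, or she was the last leftover with $A'=\emptyset$. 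In both cases her utility is $0$, and $0\geq\min\{\cdot,\bar p\}$ when $\bar p=0$. The remaining case is $\bar p>0$, which forces $m\geq n$; then activities cannot run out before $i$ is placed, because each step removes at least one agent.
\end{itemize}

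\textbf{Condition~\ref{NOM:inf}, upper bound for any $d_i$.} I need two constructions with $M=M_{d_i}$.
\begin{itemize}
\item Forcing the pair: let $j^*$ declare $w_{j^*}(i)=p_{j^*}(a_m)=M$, set all other weights to $-M$ and all other preferences to $0$. The first iteration selects $(i,j^*,a_m)$, giving true utility $p^t_i(a_m)+w^t_i(j^*)$.
\item Forcing the bad solo: assign the others distinct activities $a_1,\dots,a_{\min(n-1,m)}$, each declaring $p_j=kM$ with distinct multiples $k$ on her activity, with all weights $-M$. The greedy then serves them first, one per iteration. Afterwards $i$ gets either $\void$ (if $m\le n-1$, utility $0=\bar p$) or an activity among $a_n,\dots,a_m$ (true utility $\le\bar p$).
\end{itemize}
Combining the two constructions with the lower bound gives Condition~\ref{NOM:inf}.

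The step I expect to be the main obstacle is verifying the greedy dynamics in these constructions for \emph{arbitrary} $d_i$. In particular, I must check that at each iteration the other agents' solo or pair options strictly beat every option involving $i$, including pairs $(i,j)$ where $d_i$ declares positive weights. This is why the $-M$ weights towards $i$ and the distinct multiples $kM$ are chosen, and I also need to rule out tie-breaking ambiguities.
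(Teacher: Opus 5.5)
Your proposal is correct and follows essentially the same route as the paper: the same extremal values ($p^t_i(a_1)+\max\{0,\max_{j\neq i} w^t_i(j)\}$ for Condition~1, $\min\{p^t_i(a_m)+w^t_i(j^*),\,p^t_i(a_n)\}$ for Condition~2), the same "a free activity among $a_1,\dots,a_n$ remains when $i$ is placed alone" argument, and the same two adversarial constructions (pairing $i$ with $j^*$ on $a_m$, and letting the other agents occupy $a_1,\dots,a_{n-1}$). Your $-M_{d_i}$ mutual weights and distinct multiples of $M_{d_i}$ make every greedy choice unique. The paper instead uses zero weights and an all-zero declaration for the best case, so your version is slightly more careful about ties but is the same argument.
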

\begin{proof}
We prove Conditions~\ref{NOM:sup} and~\ref{NOM:inf} of \Cref{def:NOM} separately for a fixed agent $i$ with truthful values $v_i=(p^t_i, w^t_i)$. 

Let us observe that in any outcome of \Cref{mech:matching}, there are three possible scenarios for a fixed agent~$i$:
(i) being assigned to an activity $a \in A$ together with another agent $j \in \agents\setminus\set{i}$;
(ii) being assigned to an activity $a \in A$ alone; or
(iii) being assigned to the void activity $a_{\emptyset}$.

{\em Condition~\ref{NOM:sup}.} 
Note that, independently of $d_i$, the highest utility she can achieve in an outcome of \ref{mech:matching} is $\max_{a\in A}p^t_i(a) + \max_{j\in\agents\setminus \set{i}}w^t_i(j)$, if there exist $j\neq i$ such that $w^t_i(j) \geq 0$, or $\max_{a\in A}p^t_i(a)$, if $w^t_i(j)<0$ for each $j\neq i$.

When $i$ declares truthfully, there exists a declaration of the others where \ref{mech:matching} guarantees such a utility to agent $i$. This happens, for example, when any other weight and preference of any other agent $j$ is set to $0$. 
Therefore, the best case cannot be improved by any misreport of agent $i$.
    
{\em Condition~\ref{NOM:inf}.}  Let us consider 
$j^* \in \argmin_{j \in \agents \setminus \set{i}} w^t_i(j).$
Recall we can assume $A= \set{a_1, \dots, a_m}$ and $p^t_i(a_1)\geq \dots \geq p^t_i(a_m)$; moreover, $a_k=\void$ for any $k>m$. We also set $\bar{p}=p^t_i(a_n)$

In any outcome of \Cref{mech:matching}, when truthfully reporting, the following lower bound of $i$'s utility  holds:
\begin{align}\label{eq:LBinfArbitraryWeightsMatching}
    \inf\nolimits_{\declared_{-i}} u_i({\ref{mech:matching}}(\declared_{-i}, v_i)) \geq \min\set{p_i^t(a_m) + w_i^t(j^*), \bar{p}}     
\end{align}

In fact, if $i$ is assigned by \ref{mech:matching} to some activity $a$ together to some agent $j$, then, by definition of $j^*$ and $a_m$, $p^t_i(a) + w^t_i(j) \geq p^t_i(a_m) + w^t_i(j^*)$. If instead $i$ is assigned alone to some activity, say $a_k$, for the sake of a contradiction assume, $\bar{p}> p^t_i(a_k) $, then, $\bar{p}=p_i^t(a_n)> p^t_i(a_k) \geq 0$, and hence $p_i^t(a_n) > 0$. Therefore, $n \leq m$, as otherwise $a_n = \void$ and $ p_i^t(a_n) =0$. Moreover, $k > n$, as $p_i^t(a_n) > p_i^t(a_k)$. 
On the other hand, since there are only $n$ agents, there must exist some $h \leq n$ such that no agent is assigned to activity $a_h$ in the outcome of \Cref{mech:matching}. This implies,
$p_i^t(a_h) \geq p_i^t(a_n) > p_i^t(a_k)$, which is not possible because of the greedy choice of $a_k$ during the execution of \ref{mech:matching}. Therefore, $p_i^t(a_k) \geq p^t_i(a_n)= \bar{p}$. This shows \Cref{eq:LBinfArbitraryWeightsMatching}. 

Consider now an arbitrary  $d_i=(p_i,w_i)$. We next prove
\begin{align}\label{eq:UBinfArbitraryWeightsMatching}
    \inf\nolimits_{\declared_{-i}} u_i({\ref{mech:matching}}(\declared_{-i}, d_i)) \leq \min\set{p_i^t(a_m) + w_i^t(j^*), \bar{p}}\, .    
\end{align}

Given any $d_i$,
\Cref{mech:matching} may possibly assign $i$ to $a_n$, and hence provide a utility of $p^t_i(a_n)= \bar{p}$ to agent $i$. This can happen when the declarations of others are as follows:
\begin{itemize}
    \item for $j\in [i-1]$, $p_j(a_j) = M_{d_i}$;
    \item for $j \in [m+1]\setminus [i]$, $p_{j+1}(a_j) = M_{d_i}$;
    \item for $j \in \agents\setminus\set{i}$, $w_j(i) = -M_{d_i}$.
\end{itemize}
Any other preference and weight is set to $0$.  

Moreover, given any $d_i$,
\Cref{mech:matching} may possibly assign $i$ to $a_m$ together with $j^*$. In fact, if $\declared_{-i}$ is such that $w_{j^*}(i) = M_{d_i}$, $p_{j^*}(a_m) = M_{d_i}$, and all other preferences and weights are set to $0$, such an outcome is indeed possible. This guarantees $i$ a utility of $p_i^t(a_m) + w_i^t(j^*)$.

These two cases guarantee that \Cref{eq:UBinfArbitraryWeightsMatching} holds true.

Finally, \Cref{eq:LBinfArbitraryWeightsMatching,eq:UBinfArbitraryWeightsMatching} prove Condition~\ref{NOM:inf} of \Cref{def:NOM} and the theorem follows.
\end{proof}

In the appendix (\Cref{appendix:binaryMechanism}), we discuss the case $\preferences=\binary$. Interestingly, \Cref{mech:matching} may fail to satisfy NOM unless the tie-breaking rule is chosen appropriately.

\begin{theorem}\label{thm:quadraticAPX}
If $\preferences\in\set{\nonneg,\binary}$ and $\weights=\arbitrary$, \Cref{mech:matching} provides an $n^2$approximation of the social optimum.
\end{theorem}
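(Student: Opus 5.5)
We will show $\opt \le n^2 \cdot \SW(\assignment)$, where $\assignment$ is the output of \Cref{mech:matching}, by comparing both quantities with a single benchmark. The benchmark is the largest \emph{marginal gain} available in the first iteration, namely $G = \max\{x, y\}$ evaluated on the full sets $\agents$ and $A$.

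\emph{Upper bound on $\opt$.} Fix an optimal assignment $\optOutcome$. Since $\opt \ge 0$, assigning everyone to $\void$ gives welfare $0$ and we may drop activities with negative contribution. Write $\SW(\optOutcome)$ as a sum over coalitions: for a coalition $C$ assigned to $a$, its contribution is
\[
\sum_{i\in C} p_i(a) + \sum_{\{i,j\}\subseteq C} \bigl(w_i(j)+w_j(i)\bigr).
\]
Each term $p_i(a)$ is at most the singleton value $x \le G$. Each pair term $w_i(j)+w_j(i)$ is at most $y - p_i(a) - p_j(a) \le y \le G$, using $p \ge 0$. There are at most $n$ preference terms in total and at most $\binom{n}{2}$ pair terms across all coalitions. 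Hence
\[
\opt \le \Bigl(n + \tbinom{n}{2}\Bigr) G \le n^2 G \quad \text{for } n \ge 2.
\]
The exact count should be checked, since we want $n + n(n-1)/2 \le n^2$, which holds. Note that the bound on pair terms only uses that some activity $a$ exists with $p_i(a), p_j(a) \ge 0$.

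\emph{Lower bound on $\SW(\assignment)$.} We need $\SW(\assignment) \ge G$. The first iteration adds exactly gain $G$, since the chosen singleton or pair forms its own coalition on a fresh activity and no agents are ever added to it later. The obstacle is that later iterations might have negative gain: a pair with $y < 0$ could still be chosen if $x < y$. Non-negative preferences rescue this. Whenever $A' \neq \emptyset$ and $\agents' \neq \emptyset$, we have $x \ge 0$, and a pair is chosen only if $y > x \ge 0$. So every step contributes a non-negative amount, and the final singleton step contributes $p_i(a) \ge 0$. Coalitions never merge, so welfare is exactly the sum of step gains. Therefore $\SW(\assignment) \ge G$.

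One degenerate case needs care: if $G = 0$, then $\opt = 0$ by the upper bound, and the ratio is $1$ by convention. The binary-preference case is a special case of $\nonneg$, so it is covered by the same argument. Combining the two bounds gives the ratio $n^2$. The main point to verify carefully is the monotone non-negativity of step gains, which is exactly where $\preferences \subseteq \nonneg$ is used.
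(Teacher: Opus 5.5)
Your proof is correct and has the same structure as the paper's: the first greedy step's gain lower-bounds $\SW$, because non-negative preferences make every later step's gain non-negative, and $\opt$ is a sum of at most $n^2$ terms, each dominated by that gain.

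Where you differ is the benchmark, and the difference matters. The paper compares against $\max\{\max_{i,a} p_i(a), \max_{j,k} w_j(k)\}$, built from individual directed weights, and bounds $\opt \le W+P \le n(n-1)\max w + n\max p$. With arbitrary weights this is not valid as written.

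\begin{itemize}
\item The first step need not gain $\max_{j,k} w_j(k)$. Take $n=2$, $m=1$, zero preferences, $w_1(2)=10$ and $w_2(1)=-9$. The mechanism's only step gains $y=1<10$.
\item The upper bound on $\opt$ needs the positive parts of the weights, since negative weights can make $W+P$ smaller than $\opt$.
\item It also needs a per-agent maximum over activities. $P$ as defined sums preferences over all $m$ activities, so $P \le n\max p$ fails in general.
\end{itemize}

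Your choices fix all three problems. You take $G=\max\{x,y\}$, and you merge the two directed weights of a pair into the single term $w_i(j)+w_j(i)\le y$, using $p\ge 0$. You count each agent's preference only once, and $G\ge x\ge 0$ keeps the term-counting step sound. The result is a correct proof, with the slightly sharper factor $n(n+1)/2$.
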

\begin{proof}
Let us consider the first step of \Cref{mech:matching}. Whichever of the two options is chosen, it provides a surplus in the social welfare of at least $\max\set{p_i(a), w_j(k)}$, for any $i, j, k \in \agents$ and $a \in A$. Since in any iteration \ref{mech:matching} the social welfare never decreases, for any $\declared,$ it holds  
\begin{align*}
    \SW\left(\ref{mech:matching}(\declared)\right) \geq \max\set{\max_{i \in \agents, a \in A} p_i(a), \max_{j, k \in \agents} w_j(k)} \,.
\end{align*}

On the other hand, we have $\opt(\declared) \leq W + P$, where\\ $W = \sum_{j, k \in \agents} w_j(k)$ and $P = \sum_{i \in \agents, a \in A} p_i(a)$, and
\begin{align*}
    W + P
    &\leq n(n-1)\cdot\max_{j, k \in \agents} w_j(k) + n\cdot\max_{i \in \agents, a \in A} p_i(a)  \, .
\end{align*}
In conclusion,  $\opt(\declared)\leq n^2\cdot\SW\left(\ref{mech:matching}(\declared)\right)$ for any $\declared$.
\end{proof}

We next prove that the result is tight asymptotically.
\begin{restatable}{theorem}{quadraticInapproximability}\label{thm:quadraticInapproximability}
    If $\preferences\in\set{\nonneg, \binary}$ and $\weights=\arbitrary$,  no algorithm can provide an approximation $O(n^{2-\epsilon})$ in polynomial time even if $m=1$, unless $\classP=\classNP$. 
\end{restatable}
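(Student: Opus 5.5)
We reduce from \textsc{MaxClique}, as in the proof of \Cref{thm:unboundedAPX}, but now exploit the strong inapproximability of \textsc{MaxClique}. Unless $\classP=\classNP$, for every $\epsilon>0$ no polynomial-time algorithm approximates the maximum clique of a graph on $N$ vertices within $N^{1-\epsilon}$ (H{\aa}stad; Zuckerman's derandomization). Since preferences must be non-negative, we can no longer use a negative preference to kill small cliques. Instead, we set all preferences to $0$ (which is admissible for both $\nonneg$ and $\binary$) and keep a single activity $a$, so $m=1$.

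Given $G=(V,E)$ with $V=[N]$, the instance has $\agents=V$. For $\set{i,j}\in E$ we set $w_i(j)=w_j(i)=1$, and for non-edges we set $w_i(j)=w_j(i)=-N^3$ (any value below $-N^2$ works). An outcome is determined by the set $C$ of agents assigned to $a$. If $C$ contains a non-edge, then $\SW<0$; otherwise $\SW=\modulus{C}(\modulus{C}-1)$. Hence $\opt=\omega(\omega-1)$, where $\omega$ is the clique number of $G$. Moreover, any outcome with $\SW\geq 0$ restricted to $C$ is a clique of size $c$ with $\SW=c(c-1)$, and one can always output the trivial outcome (all agents void, or a single edge) at no loss. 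Assume $E\neq\emptyset$, so $\omega\geq 2$ and $\opt\geq 2$.

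Now suppose a polynomial-time algorithm achieves ratio $\rho(n)=O(n^{2-\epsilon})$ with $n=N$. It returns a clique $C$ with $c(c-1)\geq \omega(\omega-1)/\rho$. We may also take the better of this and a single edge, so $c\geq 2$. Then $c^2\geq c(c-1)\geq \omega^2/(2\rho)$, so $c\geq \omega/\sqrt{2\rho}=\Omega(\omega/N^{1-\epsilon/2})$. This gives an $O(N^{1-\epsilon/2})$-approximation for \textsc{MaxClique}, contradicting the hardness result with $\epsilon'=\epsilon/2$.

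The main obstacle is only bookkeeping. The welfare is quadratic in the clique size, so the exponent halves: a $n^{2-\epsilon}$ welfare approximation translates into a $n^{1-\epsilon/2}$ clique approximation, which is exactly why the threshold is $n^2$. One must also ensure the algorithm's output can be taken to have nonnegative welfare. If its output has negative welfare, the ratio is unbounded anyway since $\opt>0$, so a bounded-ratio algorithm necessarily outputs a clique. A last check is that the $n$ in the ratio equals the number of graph vertices, since no auxiliary agents are added.
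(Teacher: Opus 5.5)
Your proposal is correct and follows the paper's proof. It uses the same single-activity reduction from \textsc{MaxClique}, with weights $1$ on edges and $-n^3$ on non-edges, and the same square-root step turning an $O(n^{2-\epsilon})$ welfare ratio into an $O(n^{1-\epsilon/2})$ clique ratio. The only difference is that the paper sets $p_i(a)=1$ rather than $0$, so $\opt=\bar{k}^2$ and a clique of size $k$ gives welfare exactly $k^2$; this removes the need for your assumption $E\neq\emptyset$ and the factor $2$ from $\omega(\omega-1)\geq\omega^2/2$.
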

To show the theorem, we use the fact that for \textsc{MaxClique}, there exists no poly-time algorithm providing a $O(\modulus{V}^{1-\epsilon})$-approximation, for any $\epsilon\ll1$~\cite{haastad1999clique}.

\subsection{Non-negative Preferences and Weights}
When $\preferences=\weights=\nonneg$, Bilò et al. showed that for $m\leq 2$ an optimal solution can be computed in poly-time, and we showed that any opt is NOM. Bilò et al.\ also provided a $2-\frac{1}{m}$ approximation for $m\geq 3$. We next reformulate their algorithm and show that it is indeed a NOM mechanism for this setting. 

\begin{mechanism}\label{mech:2appx}
Given $\declared$, the mechanism at first creates the following two assignments:
\begin{enumerate}
    \item[$\assignment^1$] where all agents are assigned to the same activity $a^*\in\argmax_{a\in A} \sum_{i\in\agents} p_{i}(a)$, that is, $z_i^1=a^*$, $\forall\,i\in\agents$;
    \item[$\assignment^2$] where each agent is assigned to her most preferred activity, that is, $z_i^2\in\argmax_{a\in A} p_i(a)$, $\forall\,i\in\agents$.
\end{enumerate}
In both cases, ties are broken according to a prefixed ordering of the activities.
Then, \ref{mech:2appx} outputs $\assignment^1$, if $\SW(\assignment^1)\geq \SW(\assignment^2)$, and $\assignment^2$, otherwise.
\end{mechanism}

\begin{restatable}{theorem}{twoapxNOM}\label{thm:2apxNOM}
    If $\preferences,\weights= \nonneg$, for $m\geq 3$, \ref{mech:2appx} is NOM.
\end{restatable}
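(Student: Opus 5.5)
The plan is to prove the two conditions of \Cref{def:NOM} separately for a fixed agent $i$ with true values $v_i=(p^t_i,w^t_i)$, activities ordered so that $p^t_i(a_1)\ge\dots\ge p^t_i(a_m)$. As in the proofs of \Cref{thm:OPTisNOMarbitraryWeights,thm:OPTisNOMnonNegativeWeights}, I will compute an explicit value that lower-bounds the truthful worst case and show that every declaration $d_i$ can be pushed down to that value. Throughout I write $w^t_i(\agentsmini)=\sum_{j\neq i}w^t_i(j)$.

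\emph{Condition~\ref{NOM:sup}.} Since $\weights=\nonneg$, the largest utility $i$ can get in any assignment is $\bar u = p^t_i(a_1)+w^t_i(\agentsmini)$. Let $i$ report truthfully and let every other agent declare all zeros. Then $a^*$ is a top activity of $i$ (ties broken by the fixed ordering), and $\SW(\assignment^1)=p^t_i(a^*)+w^t_i(\agentsmini)\ge p^t_i(a_1)\ge\SW(\assignment^2)$. Indeed, in $\assignment^2$ agent $i$ collects $p^t_i(a_1)$ plus at most her weights towards the others, and the others contribute nothing. So the mechanism outputs $\assignment^1$ and $i$ attains $\bar u$, which no misreport can beat.

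\emph{Condition~\ref{NOM:inf}, lower bound.} Let $i$ report truthfully. Every agent is assigned to a real activity, and weights are non-negative. If the output is $\assignment^1$, then $i$ shares $a^*$ with all other agents and gets at least $p^t_i(a_m)+w^t_i(\agentsmini)$. If the output is $\assignment^2$, then $i$ sits on a truthfully most preferred activity and gets at least $p^t_i(a_1)$. Hence the truthful infimum is at least
\begin{align*}
u^{\min}=\min\set{p^t_i(a_m)+w^t_i(\agentsmini),\; p^t_i(a_1)}.
\end{align*}

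\emph{Condition~\ref{NOM:inf}, upper bound for arbitrary $d_i$.} I will build two declarations $\declared_{-i}$ of the others using $M_{d_i}$.

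First, to realize $p^t_i(a_m)+w^t_i(\agentsmini)$: every $j\neq i$ declares $p_j(a_m)=M_{d_i}$ and $w_j(k)=M_{d_i}$ for all $k\neq j$, including $k=i$. Then $a^*=a_m$. Also $\SW(\assignment^1)$ exceeds $\SW(\assignment^2)$, because $\assignment^1$ collects all the $M_{d_i}$ terms, including the $w_j(i)$ terms. In $\assignment^2$, by contrast, agent $i$ may sit on a different declared favourite $b$, losing at least $(n-1)M_{d_i}$ while gaining at most the declared values of $i$, which are smaller by the choice of $M_{d_i}$.

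Second, to realize at most $p^t_i(a_1)$: let $b$ be $i$'s declared favourite in $\assignment^2$. Using $m\ge3$, pick two activities $c_1,c_2\neq b$ and split $\agentsmini$ into two non-empty groups. Members of the first group declare $p_j(c_1)=M_{d_i}$, members of the second declare $p_j(c_2)=M_{d_i}$, and all their other values are $0$. Then $\assignment^2$ has $i$ alone on $b$, while $\assignment^1$ loses roughly $M_{d_i}$ times the smaller group, which dominates anything $i$'s declared weights can add. So the output is $\assignment^2$ and $i$ gets $p^t_i(b)\le p^t_i(a_1)$. Combining the two constructions yields $\inf_{\declared_{-i}}u_i(\ref{mech:2appx}(\declared_{-i},d_i))\le u^{\min}$, as required.

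The main obstacle is the second construction when $n=2$: a single other agent cannot be split into two groups. In that case $\assignment^1$ might always weakly beat $\assignment^2$, because $i$'s declared weight towards $j$ adds to $\assignment^1$. For $n=2$ I would refine the lower bound itself. With only one other agent $j$, $i$ is either with $j$ on some activity, giving at least $p^t_i(a_m)+w^t_i(j)$, or on her true favourite. I would then check, case by case on whether the declared $w_i(j)$ is positive, whether the adversary can force $\assignment^2$ by placing $j$'s mass $M_{d_i}$ on an activity different from $b$ with $w_j(i)=0$. If $w_i(j)>0$ blocks this, I expect that the true worst case also coincides with the $\assignment^1$ value, so the two infima still match. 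Tie-breaking by the fixed activity ordering also needs care, but choosing strict $M_{d_i}$ margins should avoid ties.
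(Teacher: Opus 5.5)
For $n\geq 3$ your argument is correct and is essentially the paper's. Your first construction is the paper's instance for the case $u_i(\bar{\assignment}^1)\le u_i(\bar{\assignment}^2)$. Your split construction is its instance for the opposite case, where the paper uses the specific split $\set{j^*}$ versus $\agents\setminus\set{i,j^*}$. Showing that both upper bounds are realizable for every $d_i$ lets you skip the paper's case distinction, which is slightly cleaner. There is a small slip in Condition~1: the inequality $p^t_i(a_1)\ge\SW(\assignment^2)$ fails when the others' tie-broken favourite coincides with $i$'s. Your next sentence nonetheless gives the correct bound $\SW(\assignment^2)\le p^t_i(a_1)+w^t_i(\agentsmini)=\SW(\assignment^1)$, and ties go to $\assignment^1$.

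The one real gap is $n=2$, which you flag but leave at ``I expect''. Your diagnosis is right, and it applies to the paper's proof as well. With $\agents\setminus\set{i,j^*}=\emptyset$, the paper's second instance gives $\SW(\assignment^1)-\SW(\assignment^2)=p_i(a'')+w_i(j^*)-p_i(a^*)$. This is nonnegative, for instance, whenever the declared $w_i(j^*)\ge p_i(a^*)$, so $\assignment^1$ is output and the claimed bound $p^t_i(a^*)$ is not reached.

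Your planned case analysis on the declared $w_i(j)$, however, looks at the wrong side. What is needed is a sharper \emph{truthful} lower bound, and it holds unconditionally. Suppose $n=2$, $i$ reports truthfully, and the output separates $i$ from $j$, with $j$ on $b_j$. Then the output is $\assignment^2$ with strict inequality. Since $\assignment^1$ uses an activity maximizing $p^t_i+p_j$, we get
\[
p^t_i(a_1)+p_j(b_j)=\SW(\assignment^2)>\SW(\assignment^1)\ge p^t_i(b_j)+p_j(b_j)+w^t_i(j)+w_j(i).
\]
Hence $i$'s utility in this outcome is $p^t_i(a_1)>p^t_i(a_m)+w^t_i(j)$. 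Every outcome that places $i$ with $j$ also gives her at least $p^t_i(a_m)+w^t_i(j)$. So the truthful infimum is at least, and in fact equal to, $p^t_i(a_m)+w^t_i(j)$. Your first construction attains exactly this value for every $d_i$, which closes the case $n=2$ without any case analysis on $d_i$.
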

\begin{proof}
We prove Conditions~\ref{NOM:sup} and~\ref{NOM:inf} of \Cref{def:NOM} separately for a fixed agent $i$ with truthful values $v_i=(p^t_i, w^t_i)$. 

{\em Condition~\ref{NOM:sup} (sketch).}
    If $i$ truthfully reports $v_i=(p_i^t, w_i^t)$, it is not hard to see that there always exists a declaration of the others guaranteeing her a utility equal to $\max_{a\in A} p_i^t(a) + \sum_{j\in\agents\setminus\set{i}} w_i^t(j)$. The appropriate declaration is provided in the appendix for completeness. 
    Since no assignment can guarantee $i$ higher utility, no manipulation can increase her utility in the best-case scenario and Condition~\ref{NOM:sup} follows. 
    
{\em Condition~\ref{NOM:inf}.}
    Among the outcomes of type $\assignment^1$, the worst possible for $i$ is being assigned to an $a^-\in \argmin_{a\in A}p_i^t(a)$, together with all the other agents; let us denote this specific outcome $\bar{\assignment}^1$.
    On the other hand, among the outcomes of type $\assignment^2$, the worst possible for agent $i$ is one where she is assigned to $a^+\in \argmax_{a\in A} p_i^t(a)$ alone and all the other agents are assigned to $a\neq a^+$; let us denote this specific outcome $\bar{\assignment}^2$.

    Assume $u_i(\bar{\assignment}^1) \leq u_i(\bar{\assignment}^2)$,  hence, $\bar{\assignment}^1$ is the worst outcome for $i$ that \Cref{mech:2appx} may possibly provide. Given $d_i=(p_i, w_i)$,
    let $a^-\in \argmin_{a\in A}p_i^t(a)$, we define $\declared_{-i}$ as follows:
    \begin{itemize}
        \item for all $j\in\agents\setminus\set{i}$, $p_j(a^-)=M_{d_i}$;
        \item for all $j\in\agents\setminus\set{i}$ and $k\in\agents$, $w_j(k)=M_{d_i}$.
        \item any other value is set to $0$.
    \end{itemize}

    In such an instance, the algorithm will assign all the agents to $a^-$, regardless of the declaration $d_i$ of $i$. Therefore, we can conclude $\inf\nolimits_{\declared_{-i}} u_i(\mech(\declared_{-i}, d_i))=\inf\nolimits_{\declared_{-i}} u_i(\mech(\declared_{-i}, v_i))$.

    Assume now $u_i(\bar{\assignment}^1) > u_i(\bar{\assignment}^2)$, hence, $\bar{\assignment}^2$ is the worst possible outcome that \Cref{mech:2appx} may possibly provide for agent $i$.
    Let $a^*\in \argmax_{a\in A} p_i(a)$ the activity coming first in the prefixed ordering used in \Cref{mech:2appx}, that is, when $i$ declares $d_i=(p_i, w_i)$ in the assignment of type $\assignment^2$ she will be assigned to the activity $a^*$.
    Now, let $a', a''\in A\setminus\set{a^*}$, and let $j^*\in\agentsmini$, we set $\declared_{-i}$ as follows:
    \begin{itemize}
        \item for all $j\in\agents\setminus\set{i,j^*}$, $p_j(a')=M_{d_i}$;
        \item $p_{j^*}(a'')=M_{d_i}$;
        \item any other value is set to $0$.  
    \end{itemize}
    In this instance, the assignment of type $\assignment^2$ guarantees a higher welfare than that of $\assignment^1$. 
    Moreover, in the assignment of type $\assignment^2$, $i$ is assigned to $a^*$. Hence, when declaring $d_i$ we have $\inf\nolimits_{\declared_{-i}} u_i(\mech(\declared_{-i}, d_i))\leq p^t_i(a^*) \leq  p^t_i(a^-) = u_i(\bar{\assignment}^2)$. Being $\bar{\assignment}^2$ the worst possible for $i$,  $\inf\nolimits_{\declared_{-i}} u_i(\mech(\declared_{-i}, v_i))\geq u_i(\bar{\assignment}^2)$. This proves that Condition~\ref{NOM:inf} holds true also in this case.
\end{proof}

%% file: conclusions.tex
\section{Future Work}
Although we provided an almost complete picture of the problem, several interesting open questions remain. 

First, the compatibility between optimality and non-obvious manipulability in the case of binary preferences and binary weights is still unresolved. In the supplemental material, we present an example showing that not every optimal mechanism satisfies NOM, and we conjecture that there exists at least one optimal mechanism that is non-obviously manipulable.

Moreover, \cite{ferraioli2025non} provided, for additively separable and fractional HGs, a characterization of sufficient conditions under which approximate mechanisms satisfy NOM. Extending such a characterization to the \gasp\ appears a promising direction for future work.

Finally, from an approximation point of view, the complexity of approximation for $\preferences=\nonneg$ and $\weights=\arbitrary$ remains open. 

%% file: appendix.tex
\section{Appendix}\label{sec:supplemental}

\subsection{Missing Proofs}
\WorstActivity*
\begin{proof}
    Let us construct the following declarations of the others:
    \begin{itemize}
        \item for every $j \in \agentsmini$ and every $a\in A'$, $p_j(a)=M_{d_i}$;
        \item for every $j, k \in \agentsmini$, $w_j(k) = w_j(i) = -2M_{d_i}$;
        \item any other value is set to $0$.
    \end{itemize}
    Here, in the optimum, for each $a\in A'$ there exists a unique agent from $\agentsmini$ assigned to it. Moreover, $i$ cannot be assigned to any activity in $A'$ as she cannot be assigned together with any other agent, hence the thesis follows.
\end{proof}

\OptSingle*
\begin{proof}
Since all weights are non-negative, assigning $i$ to their most preferred activity together with some (possibly empty) subset of $\agents \setminus \set{i}$ is strictly better in terms of the social welfare compared to assigning just them to any other activity. 
\end{proof}

\NonemptyCoalition*
\begin{proof}
    For any fixed declaration $d_i = (p_i, w_i)$, coalition $C\subseteq \agentsmini$, and $a^* \in A$, let $a' \in A \setminus\set{a^*}$ and let us set $\declared_{-i}$ as follows:
    \begin{itemize}
        \item for every $j \in C$, $p_j(a^*) = w_j(i) = M_{d_i}$;
        \item for every $k \in \agents\setminus\left(C \cup \set{i}\right)$, $p_k(a') = M_{d_i}$;
        \item any other value is set to $0$.
    \end{itemize}
    In the resulting instance, in any optimum assignment, $i$ will be assigned to $a^*$ as well as all the agents in $C$; no other agent is assigned to $a^*$ as they will be assigned to $a'$.
\end{proof}

\AloneToActivity*
\begin{proof}
    For any fixed $d_i = (p_i, w_i)$, we set $\declared_{-i}$ as follows: 
    \begin{itemize}
        \item for every $j\in\agentsmini$ and $a\in A$, $p_j(a)= -M_{d_i}$
        \item any other value is set to $0$.
    \end{itemize}
    In the resulting instance, in any optimum, $i$ cannot end up assigned to an activity with other agents, as it would lead to a negative social welfare. In conclusion, whichever activity $i$ is assigned to, including $\void$, she will be alone.
\end{proof}

\existsOPTbinaryPreferences*
\begin{proof}
    Let $\mech$ be a mechanism that outputs an optimal outcome and, in case of ties, that is, multiple optimal outcomes exist, prioritizes maximizing the utility of the agent who has stronger opinion on others, i.e.\ of the agent $j$ with higher $\max_{j' \in \agents\setminus\set{j'}}\left|w_j(j')\right|$. In case ties persist, then the mechanism can break them arbitrarily.
    
    Then, let $i$ be an arbitrary agent with truthful valuations $v_i=(p_i^t, w_i^t)$. Recall that, without loss of generality, we can assume $p_i^t(a_1) \geq p_i^t(a_2) \geq \dots \geq p_i^t(a_m)$.  Moreover, for all $k \geq m+1$, $a_k = \void$.

    Condition~\ref{NOM:sup} of \Cref{def:NOM} follows from \Cref{cor:OptBest} for any optimal mechanism and does not rely on the tie-breaking rule. We next prove Condition~\ref{NOM:inf} of \Cref{def:NOM}. 

    Let $\bar{p}=\max\set{p_i^t(a_n), 0}$ and  $C^*$ be any coalition such that
    $$ C^* \in \argmin_{{\emptyset \neq C\subseteq\agentsmini}} w^t_i(C).$$

    We first focus on the truthful declaration $v_i$ of $i$ and show that
    \begin{align}\label{eq:LBinfArbitraryWeightsBinaryPreferences}
        \inf\nolimits_{\declared_{-i}} u_i(\mech(\declared_{-i}, v_i)) \geq \min\set{p_i^t(a_m) + w_i^t(C^*), \bar{p}} \,.
    \end{align}
    
    For the sake of a contradiction, suppose that there is an optimal outcome where $i$ gets a strictly lower than $u^{min}=\min\set{p_i^t(a_m) + w_i^t(C^*), \bar{p}} $. 
    If $i$ is assigned to some activity $a\in A$ only the agents $C \subseteq \agentsmini$, where $C\neq \emptyset$, then, by definition of $C^*$ and the ordering of activities, we have $w_i^t(C) \geq w_i^t(C^*)$ and $p_i^t(a) \geq p_i^t(a_m)$. Hence, such an assignment yields utility at least $w_i^t(C^*) + p_i^t(a_m)\geq u^{min}$ for agent $i$.
    
    Therefore, to obtain a utility lower than $u^{min}$, $i$ must be assigned alone to some activity in $A \cup \set{\void}$. Assume that $i$ is assigned to $a_k$, hence, $p^t_i(a_k)< u^{min}\leq \bar{p}$.
    
    Since the outcome is optimal, it must be that $p_i^t(a_k) \geq 0$; otherwise, assigning agent $i$ to the void activity would strictly increase the social welfare. This implies that $\max\{p_i^t(a_n), 0\}= \bar{p}> p^t_i(a_k) \geq 0$, and hence $p_i^t(a_n) > 0$. Therefore, $n \leq m$, as otherwise $a_n = \void$ and $ p_i^t(a_n) =0$. Moreover, $k > n$, as $p_i^t(a_n) > p_i^t(a_k)$. 
    
    On the other hand, since there are only $n$ agents, there must exist some $h \leq n$ such that no agent is assigned to activity $a_h$. We can then reassign agent $i$ from $a_k$ to $a_h$, which would strictly increase the social welfare, as $p_i^t(a_h) \geq p_i^t(a_n) > p_i^t(a_k)$. This contradicts the optimality of the original outcome and proves \Cref{eq:LBinfArbitraryWeightsBinaryPreferences}.

    Consider now any possible declaration $d_i$ of $i$, we next show 
    \begin{align}\label{eq:UBinfArbitraryWeightsBinaryPreferences}
        \inf\nolimits_{\declared_{-i}} u_i(\mech(\declared_{-i}, d_i)) \leq \min\set{p_i^t(a_m) + w_i^t(C^*), \bar{p}}\, .
    \end{align}

    Let us construct $\declared_{-i}$ in the following way:
    \begin{itemize}
        \item for every $j, k \in C^*$, $w_j(k) = w_j(i) = M_{d_i}$ and $p_j(a_m) = 1$;
        \item for every $j \in C^*$ and $k \in \agents\setminus\left(C \cup \set{i}\right)$, $w_j(k) = w_k(j) = w_k(i) = - M_{d_i}$;
        \item any other preference and weight is set to $0$.  
    \end{itemize}
    
    By the definition of $M_{d_i}$, see Eq.~(\ref{eq:M_d_i}), if $|C^*| > 1$, in any optimal assignment, the agents in $C$ as well as $i$ must be assigned to $a_m$ while no other agent is assigned to $a_m$. If, instead, $|C^*| = 1$, i.e. $C^*$ consists of a single agent $j^*$, then assigning both $i$ and $j^*$ to $a^*$ may result in the same social welfare as assigning them to some $a' \in A \setminus \set{a_m}$ for which $p_i(a') = 1$. But $\max_{j' \in \agents\setminus \set{j^*}}\left|w_j(j')\right| = M_{d_i} > \max_{j' \in \agentsmini} \left|w_i(j')\right|$ by the definition of $M_{d_i}$, so $\mech$ will output the outcome where $i$ and $j^*$ are assigned to $a^*$.

    Thus, for any $d_i$, there exists a $\declared_{-i}$ such that $i$ is assigned to one of her least preferred activities, without loss of generality $a_m$, together with exactly all the agents in $C^*$. Hence, $\inf\nolimits_{\declared_{-i}} u_i(\mech(\declared_{-i}, d_i)) \leq p_i^t(a_m) + w_i^t(C^*)$.

    Now, it remains to show that $\inf\nolimits_{\declared_{-i}} u_i(\mech(\declared_{-i}, d_i)) \leq \bar{p}$. Let us construct another declaration of the other agents $\hat{\declared}_{-i}$:
    \begin{itemize}
        \item for every $j \in \agentsmini$, $\hat{p}_j(a_j)=1$;
        \item for every $j, k \in \agentsmini$, $\hat{w}_j(k) = \hat{w}_j(i) = -M_{d_i}$;
        \item any other value is set to $0$.
    \end{itemize}
    Here, none of the agents can be assigned to the same activity, and, since for every $j \in \agentsmini$ $\max_{j' \in \agents\setminus \set{j}}\left|\hat{w}_j(j')\right| = M_{d_i} > \max_{j' \in \agentsmini} \left|w_i(j')\right|$ by the definition of $M_{d_i}$, $\mech$ will prioritize the utility of any agent from $\agentsmini$ over the utility of $i$. Therefore, in the outcome output by $\mech$, for each $a\in \set{a_1, a_2, \dots, a_{n - 1}}$ there exists a unique agent from $\agentsmini$ assigned to it, and $i$ is assigned to some $a_h \in A \setminus \set{a_1, \dots, a_{n-1}}$ with $h \geq n$, alone, implying that agent $i$ gets a utility $p_i^t(a_k)\leq \max\set{p_i^t(a_n), 0}$. Thus, $\inf\nolimits_{\declared_{-i}} u_i(\mech(\declared_{-i}, d_i)) \leq \bar{p}$, which concludes the proof of \Cref{eq:UBinfArbitraryWeightsBinaryPreferences}.

    In conclusion, putting together \Cref{eq:LBinfArbitraryWeightsBinaryPreferences,eq:UBinfArbitraryWeightsBinaryPreferences},  Condition~\ref{NOM:inf} of \Cref{def:NOM} holds true.
    \end{proof}

\OPTisNOMnonNegativeWeights*
\begin{proof}[Proof of the case $\preferences=\nonneg$.]
Assume $\preferences=\nonneg$.

    Condition~\ref{NOM:sup} of \Cref{def:NOM} follows from \Cref{cor:OptBest}. We next show that Condition~\ref{NOM:inf} of \Cref{def:NOM} is also satisfied.

    Let $ j^* \in \argmin_{j \in \agents} w^t_i(j)$. Recall can assume $a_1$ is the most preferred while $a_m$ is the least preferred activity according to $i$'s truthful declaration.    
    We first consider the truthful report $v_i=(p^t_i, w^t_i)$ and show    \begin{align}\label{eq:LBinfNNweightsNNprefs}
       \inf\nolimits_{\declared_{-i}} u_i(\mech(\declared_{-i}, v_i)) \geq p_i^t(a_m) + w_i^t(j^*)
       \, .
    \end{align}
    Since $\weights = \nonneg$, and $a_m$ is the least preferred activity of $i$, and, by the definition of $j^*$, for any $\emptyset \neq C \subseteq \agentsmini$ and $a \in A$, it holds that $p_i^t(a) + w_i^t(C)\geq p_i^t(a_m) + w_i^t(j^*)$, as soon as $i$ is assigned to some activity with at least one other agent, she has a utility of at least $p_i^t(a_m) + w_i^t(j^*)$. 
   To obtain a lower utility, agent $i$ must be assigned alone to some, possibly void, activity. From \Cref{lemma:OptSingle}, such an activity can only be an activity in $\argmax_{a\in A \cup \set{\void}} p_i^t(a)$, which means an activity having value $p_i^t(a_1)$. If $p_i^t(a_1) \geq p_i^t(a_m) + w_i^t(j^*)$, then \Cref{eq:LBinfNNweightsNNprefs} holds true.
   
   If $p^t_i(a_1) < p_i^t(a_m) + w_i^t(j^*)$ and, in an optimum, $i$ is assigned to $a_1$ alone, then, since $\weights, \preferences = \nonneg$ and $m, n \geq 2$, there always exists an activity $a' \in A\setminus\set{a_1}$ such that in the same optimum a non-empty coalition $C \in \agentsmini$ is assigned to $a'$. But reassigning $i$ to $a'$ results in a surplus in the social welfare equal to 
   \begin{align*}
       p^t_i(a') - p^t_i(a_1) + w^t_i(C) + \sum_{j \in C}w_j(i) \\\geq p^t_i(a') + w^t_i(C) - p^t_i(a_1) \\\geq p^t_i(a_m) + w^t_i(j^*) - p_i^t(a_1) > 0,
   \end{align*}
   which means that $i$ cannot be a single agent assigned to $a_1$ in an optimum in the case when $p^t_i(a_1) < p_i^t(a_m) + w_i^t(j^*)$.
   
   In conclusion, \Cref{eq:LBinfNNweightsNNprefs} holds true.

    Let us now consider an arbitrary declaration $d_i$ of agent $i$. Let $C = \set{j^*}$ and $a^*= a_m$. By \Cref{lemma:NonemptyCoalition}, there exists a declaration of the other agents such that, in any optimal assignment, agent $i$ is assigned to activity $a_m$ together with exactly agent $j^*$.
    Hence, when declaring an arbitrary $d_i$ we have
    \begin{align}\label{eq:UBinfNNweightsNNprefs}
    \inf\nolimits_{\declared_{-i}} u_i(\mech(\declared_{-i}, d_i)) \leq p_i^t(a_m) + w_i^t(j^*) \, .
    \end{align}
    
    Putting together \Cref{eq:LBinfNNweightsNNprefs,eq:UBinfNNweightsNNprefs} we can conclude Condition~\ref{NOM:inf} of \Cref{def:NOM} holds true.
\end{proof}

\noOPTbinaryWeights*
\begin{proof}
    Consider an instance with two activities, $a_1$ and $a_2$, and two agents, $1$ and $2$. Let  $v_1=(p^t_1, w^t_1)$ be the truthful values of $1$ with $p^t_1(a_1) = 1.5$, $p^t_1(a_2) = 0$, and $w^t_1(2) = 1$.
    
    From \Cref{lemma:OptSingle} it follows that, in an optimum, the utility of $i$ cannot be less than $1$. At the same time, in the unique optimum of the instance, where the agent $2$ declares $d^*_2=(p^*_2, w^*_2)$ so that $p^*_2(a_1) = 0$, $p^*_2(a_2) = 3$, $w^*_2(1) = 1$, agent $1$ is assigned to the activity $a_2$ together with $2$. Thus, in the worst case, the utility of the agent $1$ is exactly equal to $1$.

    If $1$ declares $d_1=(p_1,w_1)$ where $p_1(a_1) = 4$ instead of $1.5$, and any other value stays the same as in $v_1$, then, in an optimum outcome, regardless the declaration $d_2=(p_2, w_2)$ of $2$, agent $1$ will always be assigned to $a_1$: since $w_2(1) \in \{0, 1\}$, $w^t_1(2) + w_2(1) + p^t_1(a_2) \leq 1 + 1 + 0 < p_1(a_1)$.
    If again agent $2$ declares $d^*_2=(p^*_2, w^*_2)$, agent $1$ will be assigned alone to activity $a_1$.
    So, in the worst case, $i$ gets utility at least equal to $1.5$, which makes the false declaration $d_i$ an obvious manipulation for $i$, as it violates Condition~\ref{NOM:inf}, implying that there does not exist an optimal NOM mechanism.
\end{proof}

\quadraticInapproximability*
\begin{proof}
Given an instance of \textsc{MaxClique} in an undirected graph $G=(V, E)$ with $n$ vertices.

We build a \gasp\ instance with only one activity $a$ and $n$ agents, one for each vertex of $G$, that is, for each $i\in V$, $i$ is an agent in the \gasp\ instance. Each agent values $1$ activity $a$. We set $w_i(j)=w_j(i)=1$, if $(i,j)\in E$, and $w_i(j)=w_j(i)=- n^3$, otherwise. Notice that in an optimal outcome, if $(i,j)\not\in E$, then $i$ and $j$ cannot be assigned to the same activity; this also holds true for any outcome with non-negative social welfare.
Therefore, in such an instance, the optimum is achieved by assigning the agents corresponding to a maximum clique of $G$ to activity $a$ and the remaining agents to the void activity. Hence, the value of the optimum is $\bar{k}\cdot (\bar{k} -1) +\bar{k} = \bar{k}^2 $, where $\bar{k}$ is the number of agents in a maximum clique of $G$. Consider now an approximation algorithm guaranteeing an approximation in $O(n^{2-\epsilon})$. Since the approximation is bounded, it means that the agents assigned to the activity must form a clique in $G$. Let $k$ be the size of such a clique, then the social welfare of the output is $k\cdot(k-1) + k =k^2$. By assumption, 
$
\frac{\bar{k}^2}{k^2} \in O(n^{2-\epsilon})\, ,
$
which means $\frac{\bar{k}}{k}\in O(n^{1-\epsilon'})$, where $\epsilon'= \epsilon/2$. This contradicts the inapproximability result for \textsc{MaxClique}.
\end{proof}

\twoapxNOM*
\begin{proof}[Proof of Condition~\ref{NOM:sup}.]
    If $i$ truthfully reports $v_i=(p_i^t, w_i^t)$, there exists a declaration of the others guaranteeing her a utility equal to $\max_{a\in A} p_i^t(a) + \sum_{j\in\agents\setminus\set{i}} w_i^t(j)$. Such a declaration $\declared_{-i}$ is defined as follows. 
    Let us fix $a^* \in \argmax_{a\in A} p_i^t(a)$, if $\argmax_{a\in A} p_i^t(a)\neq \emptyset$, and $a^*\in A$, otherwise. We set:
    \begin{itemize}
        \item $p_j(a^*)=1$, for each $j\in\agents\setminus\set{i}$;
        \item $w_j(k)= 1$, for all $j\in\agents\setminus\set{i}$, $k\in\agents$.
    \end{itemize}
Any other preference and weight is set to $0$.  
    
    For such an instance, all the agents will be assigned together with $i$ to $a^*$. Since no assignment can guarantee $i$ higher utility, no manipulation can increase her utility in the best-case scenario and Condition~\ref{NOM:sup} follows.
\end{proof}

\subsection{A modification of \Cref{mech:matching} in case $\preferences=\binary$ and $\weights=\arbitrary$}\label{appendix:binaryMechanism}

\begin{example}
When $\preferences=\binary$ and $\weights=\arbitrary$, \Cref{mech:matching}, depending on the tie-breaking rule it uses, may not be NOM.     

Consider an instance with $m = 2$, $n = 2$, and let agent $1$ have the following true preferences: $w^t_1(2) = 0$, $p^t_1(a_1) = 1$, $p^t_1(a_2) = 0$. The worst possible utility of the agent $1$ equals $0$. If the agent $2$ declares $w_2(1) = 1$, $p_2(a_2) = 1$, $p_2(a_1) = 0$, \ref{mech:matching} will have to assign both agents $1$ and $2$ either to $a_1$ or to $a_2$ since it will result into surplus in the social welfare being $2$, which is not possible when assigning a single agent to their most preferred activity. Let us assume that \ref{mech:matching} chooses the activity $a_2$ (when breaking ties), which will result in the agent $1$ having the utility $0$.

Now, let us consider the declaration $d_1$ of agent $1$ identical to the true one except for $w_1(2)$, which is declared to be $0.1$. Then, regardless of $w_2(1)$ and $p_2(1)$, being assigned to $a_1$ either together with the agent $2$ or alone is always possible in an optimum for the agent $1$: $\max\{p_1(a_1), p_1(a_1) + w_1(2) + w_2(1) + p_2(a_1)\} \geq p_1(a_2) + w_1(2) + w_2(1) + p_2(a_2)$ since $p_1(a_1) + p_2(a_1) \geq 1$ and $p_1(a_2) + p_2(a_2) \leq 1$. And if \ref{mech:matching} always chooses one of these two options, then the agent $1$ always gets the utility $1$ and thus successfully improves their worst case.
\end{example}
With the next theorem, we show that appropriately choosing the tie-breaking rule will make \Cref{mech:matching} NOM.
\begin{theorem}
If $\preferences=\binary$ and $\weights=\arbitrary$, there exists a tie-breaking rule that makes \Cref{mech:matching} NOM.
\end{theorem}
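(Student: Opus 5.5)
We will mirror the proof of \Cref{thm:matchingNOMnonNeg}, with the tie-breaking rule chosen in the spirit of \Cref{thm:existsOPTbinaryPreferences}. In the argmax selections of lines~\ref{line:argmax1} and~\ref{line:argmax2}, ties are broken in favour of options involving the agent(s) with larger declared $\max_{j'}|w_j(j')|$. Among options involving a given agent, the rule prefers pairs over singles when $x=y$ only if the pair option involves such an agent; remaining ties are broken by a fixed order of activities and agents. The point of the rule is that the other agents, by declaring huge weights $\pm M_{d_i}$, can always win ties against $i$ whatever $d_i$ is. Binary preferences prevent them from dominating through preferences, since these are capped at $1$.

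For Condition~\ref{NOM:sup}, the argument of \Cref{thm:matchingNOMnonNeg} transfers. Under truthful reporting, suppose all other agents declare $0$ everywhere, except that the agent $j^+\in\argmax_j w^t_i(j)$ declares $w_{j^+}(i)=M_{v_i}$. This huge weight makes the pair option $(i,j^+)$ strictly best, so ties are irrelevant. If $\max_j w^t_i(j)<0$, agent $j^+$ instead declares nothing, and the single option for $i$ is chosen on $a_1$; here a tie with others' zero-valued singles must be resolved in favour of $i$. To guarantee this, we take $w_{j}(k)=0$ for the others, so that $i$ has the largest $\max|w|$ whenever $w^t_i\neq 0$. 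In the case $w^t_i\equiv 0$, we instead let others declare $p_j\equiv 0$ and note that $x=p^t_i(a_1)$ is attained only by $i$ whenever $p^t_i(a_1)=1$. In both cases $i$ reaches the maximum utility attainable in any outcome of \Cref{mech:matching}.

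For Condition~\ref{NOM:inf}, we first reprove the lower bound~\eqref{eq:LBinfArbitraryWeightsMatching} with $\bar p=p^t_i(a_n)$. When paired, $i$ gets at least $p^t_i(a_m)+w^t_i(j^*)$. When alone on $a_k$, the greedy argument needs care, because with binary preferences ties $p^t_i(a_h)=p^t_i(a_k)$ are allowed. The contradiction only requires a strict inequality $p^t_i(a_h)>p^t_i(a_k)$, which is exactly what the paper's argument uses, so it survives. For the upper bound~\eqref{eq:UBinfArbitraryWeightsMatching}, we reuse the two constructions from \Cref{thm:matchingNOMnonNeg}, replacing $M_{d_i}$ in preferences by $1$ and keeping $\pm M_{d_i}$ in weights:
\begin{itemize}
\item The first construction pushes $i$ to $a_n$: agents $j\neq i$ prefer distinct activities $a_1,\dots,a_{n-1}$ with value $1$, and all weights towards $i$ and among them are $-M_{d_i}$. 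Then every pair option is negative and every single option has value $\le 1$. The tie-breaking favours these agents (their $\max|w|=M_{d_i}$ exceeds $i$'s), so they take $a_1,\dots,a_{n-1}$ first. After that, $i$ is placed on some $a_h$ with $h\ge n$ or on $\void$, giving utility at most $\bar p$.
\item The second construction pairs $i$ with $j^*$ on $a_m$: set $w_{j^*}(i)=M_{d_i}$ and $p_{j^*}(a_m)=1$. The pair option on $a_m$ has value roughly $M_{d_i}+1+w_i(j^*)+p_i(a_m)$. On another activity $a$, it has value $M_{d_i}+w_i(j^*)+p_i(a)$, which may tie when $p_i(a)=1>p_i(a_m)$. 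We resolve this with a second tie-breaking criterion: among pair options with equal value, prefer the activity maximizing the preference of the agent with larger $\max|w|$, here $j^*$. This selects $a_m$ (assuming w.l.o.g.\ $a_m$ among $i$'s least preferred activities).
\end{itemize}

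The main obstacle is making one consistent tie-breaking rule serve all three uses: the best case, the forcing to $a_n$, and the forcing to $a_m$. It must also remain well defined across iterations, with the ordering based only on declared values so that $i$ cannot exploit it. I would first try the lexicographic rule: value, then larger $\max|w|$ of the agent(s) involved, then the preference of that strongest agent, then a fixed order. I would then verify the three constructions against it case by case.
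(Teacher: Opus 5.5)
Your proposal is correct and follows essentially the paper's route: the truthful lower bound $\min\{p^t_i(a_m)+w^t_i(j^*),\,p^t_i(a_n)\}$ carried over from Theorem~\ref{thm:matchingNOMnonNeg}, the same two forcing constructions with preferences capped at $1$ and weights $\pm M_{d_i}$, and a pair tie-break that lets the heavier-weight agent $j^*$ pick $a_m$. The main difference is your single-agent tie-break by declared $\max_{j'}|w_j(j')|$, which is more robust than the paper's ``positively evaluated by another agent'' criterion: that criterion depends on the weights $i$ herself declares, so it need not make $i$ lose the value-$1$ ties in the $a_n$ construction. Also, taking the minimum of both bounds is cleaner than the paper's case split on the sign of $w^t_i(j^*)+p^t_i(a^*)$, and your extra clause altering the $x\geq y$ test is never triggered by your constructions and can be dropped.
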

\begin{proof}
For Condition~\ref{NOM:sup}, it is not necessary to specify a particular tie-breaking rule, as the proof of \Cref{thm:matchingNOMnonNeg} remains valid in this case under binary preferences.

To prove Condition~\ref{NOM:inf}, instead, we will adapt Algorithm~\ref{algo:matching} as follows:
\begin{itemize}
    \item[] In case $x\geq y$, the $\gets\argmin$ selects an agent that is positively evaluated by at least another agent, if any exists, and breaks ties arbitrarily, otherwise.
   \item[] In case $y>x$, the $\gets\argmin$ selects a pair of agents, if more than one activity guarantees the same contribution to the welfare, the agent, in the selected pair, with the highest weight towards the other selects an activity they should be assigned to among the ones maximizing the social welfare.
\end{itemize}

{\em Condition~\ref{NOM:inf}.} 
We denote the activities in $A$ as $a_1, \dots, a_m$ and we can assume without loss of generality that $p^t_i(a_1)\geq p^t_i(a_2)\geq \dots \geq p^t_i(a_m)$. Let us set $a^* = a_m$.

Consider $d_i=(p_i,w_i)$ an arbitrary declaration of agent $i$; note that $d_i$ may possibly be $v_i$. We next show that, regardless of agent~$i$’s declaration, the worst outcome that \Cref{mech:matching} may output for this agent does not depend on her declaration. To establish this, we distinguish between the cases in which $w^t_i(j^*) + p^t_i(a^*)$ is positive or negative, since the worst outcome depends on the sign of this quantity.
    
If $w^t_i(j^*) + p^t_i(a^*) > 0$, then the worst outcome for agent~$i$ that can be produced by \Cref{mech:matching} is being assigned to the void activity, if $m \leq n - 1$, and to $a_n$ (or any activity $a$ for which $p^t_i(a)=p^t_i(a_n)$) alone, otherwise.
Given $d_i$, this can happen when the declarations of others are as follows:
\begin{itemize}
    \item for $j\in [i-1]$, $p_j(a_j) = 1$;
    \item for $j \in [m+1]\setminus [i]$, $p_{j+1}(a_j) = 1$;
    \item for $j \in \agents\setminus\set{i}$, $w_j(i) = -M_{d_i}$.
\end{itemize}
Any other preference and weight is set to $0$. Since the mechanism breaks ties against agents evaluated negatively by any other agent $i$ will never have the chance of being assigned to any activity she values $1$.

If, instead, $w^t_i(j^*) + p^t_i(a^*)  \leq 0$, the worst outcome for $i$ is to be assigned to $a^*$ together with $j^*$. Given any $d_i$, if the declarations of the other agents are such that $w_{j^*}(i) = M_{d_i}$, $p_{j^*}(a^*) = 1$, and all other preferences and weights are set to $0$. Now, since ties are broken in favor of the agent with the highest weight towards the other, then $i$ and $j^*$ will be assigned to $a^*$.

Thus, regardless of the declaration of $i$, there exists a game instance where $i$ ends up in the worst case that can possibly be output by \ref{mech:matching}. Therefore, the worst case cannot be improved, and \Cref{mech:matching} is indeed NOM.
\end{proof}

\subsection{Further Results}\label{sec:furtherResults}

\begin{restatable}{proposition}{binaryOPTnotNecNOM}
If $\preferences=\weights=\binary$, an optimal mechanism is not necessarily NOM.
\end{restatable}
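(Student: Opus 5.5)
The plan is to exhibit an explicit optimal mechanism with an adversarial tie-breaking rule, as in the example given for $\preferences=\binary$, $\weights=\arbitrary$, but using only binary values for both agents. Take two agents $1,2$ and two activities $a_1,a_2$. Let agent $1$ have true values $p^t_1(a_1)=1$, $p^t_1(a_2)=0$ and $w^t_1(2)=0$. Her utility in any outcome is then $1$ if she is at $a_1$, and $0$ otherwise. The goal is to show that the truthful worst case can be $0$, whereas some misreport guarantees utility $1$. This violates Condition~\ref{NOM:inf} of \Cref{def:NOM}.

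\emph{Step 1: truthful worst case.} Let agent $2$ declare $p_2(a_1)=0$, $p_2(a_2)=1$ and $w_2(1)=1$. Under the declared values there are three outcomes of interest:
\begin{itemize}
\item both agents at $a_1$: welfare $1+0+0+1=2$;
\item both agents at $a_2$: welfare $0+0+1+1=2$;
\item agent $1$ at $a_1$ and agent $2$ at $a_2$: welfare $1+1=2$.
\end{itemize}
No outcome exceeds $2$, so all three are optimal. We fix the tie-breaking rule so that, when agent $1$ declares exactly $v_1$ and agent $2$ declares this $d_2$, the mechanism outputs both agents at $a_2$. This gives $\inf_{d_2}u_1(\mech(d_2,v_1))=0$.

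\emph{Step 2: the misreport.} Let agent $1$ declare $d_1$, equal to $v_1$ except that $w_1(2)=1$. We prescribe that, whenever agent $1$ declares $d_1$, the mechanism chooses an optimal outcome in which agent $1$ is at $a_1$. This is legitimate provided such an optimum exists for every binary $d_2$, and checking that is the only real work. Take any outcome in which agent $1$ is not at $a_1$.
\begin{itemize}
\item If agent $1$ is at $a_2$ together with agent $2$, the welfare is $0+1+w_2(1)+p_2(a_2)\le 2+w_2(1)$. Moving both agents to $a_1$ yields $1+1+w_2(1)+p_2(a_1)\ge 2+w_2(1)$, which is at least as large.
\item If agent $1$ is alone, whether at $a_2$ or at $\void$, her own contribution is $0$. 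Moving her to $a_1$ weakly increases welfare: if agent $2$ is not at $a_1$, agent $1$ gains $1$; if agent $2$ is at $a_1$, the two become grouped and the pair gains at least $1+w_2(1)\ge 1$, since weights are non-negative.
\end{itemize}
Hence some optimum always places agent $1$ at $a_1$. The mechanism remains optimal, and under $d_1$ agent $1$ always receives true utility $p^t_1(a_1)=1$.

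\emph{Step 3: conclusion.} Combining the two steps, $\inf_{d_2}u_1(\mech(d_2,d_1))=1>0=\inf_{d_2}u_1(\mech(d_2,v_1))$. So $d_1$ is an obvious manipulation, and this optimal mechanism is not NOM. The two tie-breaking prescriptions concern different declarations of agent $1$, so they never conflict, and all remaining ties may be broken arbitrarily. The main obstacle is the exhaustive check in Step 2 that an optimum with agent $1$ at $a_1$ always exists under $d_1$. It reduces to the inequality $p_2(a_1)+1\ge p_2(a_2)$, which holds because preferences are binary.
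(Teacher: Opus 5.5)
Your proof is correct, and it uses the same strategy as the paper: an optimal mechanism whose tie-breaking is adversarial to agent~$1$ under the truthful report and favourable under a binary misreport. In both proofs the only real work is showing that a favourable optimum exists for every declaration of the others.

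The difference lies in the witness instance. The paper takes $n=m=3$ with $p^t_1(a_1)=1$, $p^t_1(a_2)=p^t_1(a_3)=0$ and $w^t_1(2)=w^t_1(3)=1$. There the truthful worst case is $1$, and the manipulation \emph{understates} a weight ($w_1(2)=0$), after which some optimum always gives agent~$1$ true utility at least $2$. You instead reuse the two-agent, two-activity instance from the paper's example for $\weights=\arbitrary$, and make the misreport binary by \emph{overstating} $w_1(2)$ from $0$ to $1$. Then moving the pair from $a_2$ to $a_1$ never loses welfare, since $p_2(a_1)+1\ge p_2(a_2)$, and every outcome with agent~$1$ alone away from $a_1$ is strictly suboptimal.

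Your instance buys a complete case check in a few lines. The paper's three-agent verification is more delicate than it presents. Its stated reason, that some optimum always places agent $2$ or $3$ with agent~$1$ at $a_1$, fails when agents $2$ and $3$ declare $p_2(a_3)=p_3(a_3)=1$ and $w_2(3)=w_3(2)=1$, all else $0$. In that case the only optima are (i) agent~$1$ alone at $a_1$ with $\{2,3\}$ at $a_3$, and (ii) everyone at $a_3$, both with welfare $5$. The paper's conclusion still holds, because the all-at-$a_3$ optimum gives agent~$1$ true utility $2$.
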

\begin{proof}
Let $m=3$, $n=3$, and let agent $1$ have the following truthful  preferences and weights: $p^t_1(a_1) = 1$, $p^t_1(a_2) = p^t_1(a_3) = 0$, $w^t_1(2) = w^t_1(3) = 1$.

There exists such $\declared_{-1}$ that in an optimum $1$ gets a utility equal to $1$. For example, it happens when in $\declared_{-1}$ where $p_2(a_2) = p_3(a_3) = 1$ and all other preferences and weights are equal to zero. Here, the optimal social welfare equals $3$ and one of the optimum assignments is $z_1 = a_1$, $z_2 = a_2$ and $z_3 = a_3$. Let us assume an optimal mechanism outputs this assignment. Note that getting the utility equal to $1$ is the worst possible that can happen in an optimum to agent $1$ for any weights and preferences of others: if $1$ is the only agent assigned to $a_2$ or $a_3$, which is their only possibility to get $0$, reassigning them to $a_1$ strictly increases the social welfare.

Let us then consider another declaration $d_1=(p_1, w_1)$ of agent $1$ where $w_1(2) = 0$ and all other preferences and weights remain the same. In this case, regardless of what are preferences and weights of agents $2$ and $3$, there always exists an optimum giving agent $1$ utility at least equal to $2$ (according to their true valuations) since there always exists an optimum where agent $2$, or agent $3$, or both of them are assigned to the activity $a_1$ together with agent $1$.

Suppose for the new declaration of agent $1$ the same optimal mechanism always outputs such an outcome. Then, by misreporting their valuations agent $1$ improves their worst-case scenario and thus the mechanism is obviously manipulable.  
\end{proof}